\begin{document}

\title{Data-Driven Neural Polar Codes for Unknown Channels With and Without Memory}
\author{
\IEEEauthorblockN{Ziv Aharoni} 
    \IEEEauthorblockA{
    Ben-Gurion University\\
    zivah@post.bgu.ac.il }
\and
\IEEEauthorblockN{Bashar Huleihel} 
    \IEEEauthorblockA{
    Ben-Gurion University \\
    basharh@post.bgu.ac.il }
\and
\IEEEauthorblockN{Henry D. Pfister} 
    \IEEEauthorblockA{
    Duke University \\
    henry.pfister@duke.edu }
\and
\IEEEauthorblockN{Haim H. Permuter} 
    \IEEEauthorblockA{
    Ben-Gurion University \\
    haimp@bgu.ac.il }
}

\maketitle

\begin{abstract}
In this work, a novel data-driven methodology for designing polar codes for channels with and without memory is proposed. The methodology is suitable for the case where the channel is given as a "black-box" and the designer has access to the channel for generating observations of its inputs and outputs, but does not have access to the explicit channel model. 
The proposed method leverages the structure of the \gls{sc} decoder to devise a \gls{nsc} decoder. The \gls{nsc} decoder uses \glspl{nn} to replace the core elements of the original \gls{sc} decoder, the check-node, the bit-node and the soft decision. Along with the \gls{nsc}, we devise additional \gls{nn} that embeds the channel outputs into the input space of the \gls{sc} decoder. The proposed method is supported by theoretical guarantees that include the consistency of the \gls{nsc}. Also, the \gls{nsc} has computational complexity that does not grow with the channel memory size. This sets its main advantage over \gls{sct} decoder for \glspl{fsc} that has complexity of $O(|\cS|^3 N\log N)$, where $|\cS|$ denotes the number of channel states.
We demonstrate the performance of the proposed algorithms on memoryless channels and on channels with memory. The empirical results are compared with the optimal polar decoder, given by the \gls{sc} and \gls{sct} decoders. 
We further show that our algorithms are applicable for the case where there \gls{sc} and \gls{sct} decoders are not applicable. 

\end{abstract}

\begin{IEEEkeywords}
Channels with memory, data-driven, neural polar decoder, polar codes.
\end{IEEEkeywords}

\glsresetall

\blfootnote{This paper was presented in part at the International Symposium on Information Theory (ISIT) 2023 \cite{aharoni2023data}.}
\section{Introduction}

\par Polar codes allow the construction of capacity-achieving codes for symmetric binary-input memoryless channels~\cite{arikan2009channel}. When given $N$ independent copies of a binary discrete memoryless channel (DMC) $W$, \gls{sc} decoding induces a new set of $N$ binary effective channels $W^{(i)}_N$. Channel polarization is the phenomenon whereby, for $N$ sufficiently large, almost all of the effective bit channels $W^{(i)}_N$ have capacities close to 0 or 1. Specifically, the fraction of channels with capacity close to 1 approaches $\sI(W)$ and the fraction of channels with capacity close to 0 approaches $1-\sI(W)$, where $\sI(W)$ is the channel's symmetric capacity. The construction of polar codes involves choosing which rows to keep from the square generator matrix given by Arikan's transform~\cite[Section VII]{arikan2009channel}. The encoding and decoding procedures are performed by recursive formulas whose computational complexity is $O(N\log N)$.

\par Polar codes can also be applied to \glspl{fsc}. Arikan's transform also polarizes the bit channels $W^{(i)}_N$ in the presence of memory \cite{csacsouglu2019polar}, and thus the encoding algorithm is the same as if the channel is memoryless. However, the decoding algorithm needs to be updated since the derivation of the \gls{sc} decoder in \cite{arikan2009channel} relies on the memoryless property. To account for the channel memory, the channel outputs are represented by a trellis, whose nodes capture the information of the channel's memory. This trellis was embedded into the \gls{sc} decoding algorithm to yield the \gls{sct} decoding algorithm \cite{wang2014joint, wang2015construction}.


\par However, the \gls{sct} decoder is only applicable when the channel model is known and when the channel's state alphabet size is finite and relatively small. For \glspl{fsc}, the computational complexity of the \gls{sct} decoder is $O(|\cS|^3 N \log N)$, where $|\cS|$ is the number of channel states.  For Markov channels where the set of channel states is not finite, the \gls{sct} decoder is not applicable without quantization of its states. With quantization, there may be a strong tension between the computational complexity and the error introduced by quantization. Additionally, the \gls{sct} decoder cannot be used for an unknown channel with memory without first estimating the channel as it requires an explicit channel model,.

\par The \gls{sct} decoder can also be applied to a larger class of channels (e.g., insertion and deletion channels) where, given the the channel output sequence $Y^N$, a trellis can be constructed to efficiently represent $P_{X^N,Y^N}$ \cite{tal2021polar}. In that case, the decoding complexity is upper bounded by $O(M^3 N \log N)$, where $M$ is the maximum number of states in any trellis stage.  If $M$ grows linearly with $N$, then the complexity of the decoder may grow very rapidly (e.g., $\geq N^{4}$) and is dominated by the number of trellis states rather than the block length. 

\subsection{Contribution}
\par We propose a novel methodology for data-driven polar decoders. The methodology treats the channel as a ``black-box'' used to generate samples of input-output pairs without an access to the channel's explicit model.
It dissects the polar decoder into two separate components. The first is the sufficient statistic of the channel outputs, that we denote by $E$. The function $E:\cY\to\cE$ embeds the channel outputs into a latent space $\cE\subset\bR^d$. The embeddings $e\in\cE$ are then used as the inputs of the second component - a \gls{nsc} decoder. 

\par The \gls{nsc} uses three \glspl{nn} that replace the three core elements of the \gls{sc} decoder: the check-node, the bit-node and the soft decision operations. The parameters of the embedding function $E$, and the \gls{nsc} parameters are determined in a training phase, in which the \gls{mi} of the effective channels $W^{(i)}_N$ is estimated. The training of the embedding and the \gls{nsc} are performed in two alternative ways. The first trains the embedding and the \gls{nsc} jointly. The second determines the parameters of the embedding $E$ using neural estimation methods \cite{tsur2022neural, tsur2022optimizing, aharoni2022density}, and then, determines the parameters of the \gls{nsc} while the parameters of $E$ are fixed. After the training phase, the set of ``clean'' effective channels are determined by a \gls{mc} evaluation of the \gls{mi} of the effective bit channels to complete the code design.

\par The \gls{nsc} decoder is a consistent estimator of an optimal polar decoder. Specifically, for \glspl{fsc}, the \gls{nsc} decoder provides a consistent estimator of the mutual information of the effective bit channels. We also show its computational complexity, and show it does not grow with the channel memory. This is a main advantage over the \gls{sct} decoder whose computational complexity grows cubicly with the channel memory size. To complete the methodology, we provide an extension of the \gls{nsc} for a stationary input distribution. This involves using the Honda-Yamamoto scheme \cite{honda2013polar} and incorporating it into our algorithms. 

\subsection{Related work}
\par The usage of \glspl{nn} for polar codes design were considered in the past. In \cite{doan2018neural}, \glspl{nn} were used to decrease the decoding latency by designing a \gls{nn} decoder that decodes multiple symbols at once. Other instances used \glspl{nn} to aid existing algorithms, such as \cite{xu2020deep, ebada2019deep, xu2017improved}. 
The paper \cite{makkuva2021ko} presents KO codes, a family of deep-learning driven codes Reed-Muller and Polar codes on the \gls{awgn} channel. KO codes are related to the methods proposed here in the sense that we also leverage the structure of the Arikan's transform to design efficient decoders. However, we do not change Arikan's transform, and we consider channels with memory.
To the best of our knowledge, there is no instance of a data-driven polar code design for channels with memory. This work aims to address this gap by developing the necessary algorithms for this task.

\subsection{Organization}
\par The paper is organized as follows. Section \ref{sec:pre} defines the notation and gives the necessary background on polar codes. Section \ref{sec:data_driven_polar_code:memoryless} presents the methodology for data-driven polar code design for memoryless channels. Section \ref{sec:data_driven_polar_code:memory} extends the methodology for the case where the channel has memory. Section \ref{sec:hy} extends the algorithms in Sections \ref{sec:data_driven_polar_code:memoryless} and \ref{sec:data_driven_polar_code:memory} to stationary input distributions. Section \ref{sec:exp} presented the numerical results. The proofs of the paper appear in Section \ref{sec:proofs}. Section \ref{sec:conc} includes conclusions and future goals. 

\section{Notations and Preliminaries} \label{sec:pre}
Throughout this paper, we denote by $(\Omega,\cF,\bP)$ the underlying probability space on which all random variables are defined, with $\bE$ denoting expectation. \Glspl{rv} are denoted by capital letters and their realizations will be denoted by lower-case letters, e.g. $X$ and $x$, respectively. Calligraphic letters denote sets, e.g.  $\mathcal{X}$. We use the notation $X^n$ to denote the \gls{rv} $(X_1,X_2,\dots,X_n)$ and $x^n$ to denote its realization. The probability $\Pr[X=x]$ is denoted by $P_X(x)$. Stochastic processes are denoted by blackboard bold letters, e.g., $\bX := (X_i)_{i\in\bN}$. An $n$-coordinate projection of $\bP$ is denoted by $P_{X^n Y^n}:=\bP\big|_{\sigma(X^n,Y^n)}$, where $\sigma(X^n,Y^n)$ is the $\sigma$-algebra generated by $(X^n,Y^n)$. We denote by $[N]$ the set of integers $\{1,\dots,N\}$.

\par The \gls{mi} between two \glspl{rv} $X,Y$ is denoted by $\f{\sI}{X;Y}$. The \gls{di} between $X^n$ and $Y^n$ is defined as $\mathsf{I}(X^n\rightarrow Y^n) = \sum_{i=1}^n \sI(X^i;Y_i|Y^{i-1})$ \cite{massey1990causality}. For two distributions $P,Q$, the \gls{ce} is denoted by $\ce{P}{Q}$, the entropy is denoted by $\f{\sH}{P}$ and the \gls{kl} divergence is denoted by $\kl{P}{Q}$. The notation $P\ll Q$ indicates that $P$ is absolutely continuous \gls{wrt} $Q$.

\par The tuple $\left(W_{Y\vert X},\cX,\cY\right)$ defines a memoryless channel with input alphabet $\cX$, output alphabet $\cY$ and a transition kernel $W_{Y\vert X}$. Throughout the paper we assume that $\cX=\left\{0, 1\right\}$. For a memoryless channel, we denote its input distribution by $P_X = P_{X_i}$ for all $i\in\bZ$. The tuple $\left(W_{Y\Vert X},\cX,\cY\right)$ defines a time invariant channel with memory, where  $W_{Y\Vert X}=\left\{W_{Y_0\vert Y^{-1}_{-i+1},X^0_{-i+1}}\right\}_{i\in\bN}$. The term $W_{Y^N\Vert X^N}=\prod_{i=1}^{N}W_{Y_0\vert Y^{-1}_{-i+1},X^0_{-i+1}}$ denotes the probability of observing $Y^N$ causally conditioned on $X^N$ \cite{kramer1998directed}. The symmetric capacity of a channel is denoted by $\f{\sI}{W}$. We denote by $\cD_{M,N}=\left\{x_{j,i},y_{j,i}\right\}_{j\in[M], i\in[N]} \sim P_{X^{MN}} \otimes W_{Y^{MN}\| X^{MN}}$ a finite sample of inputs-outputs pairs of $M$ consecutive blocks of $N$ symbols, where $x_{j,i}, y_{j,i}$ denotes the $i$-th input and output of the $j$-th block. The term $\cD_{MN}$ denotes the same sample after its concatenation into one long sequence of inputs and outputs pairs.

\subsection{Finite State Channels}\label{sec:pre:fsc}
\par A \gls{fsc} is defined by the tuple $(\mathcal{X},\cY,\mathcal{S},P_{S^\prime,Y|X,S})$, where $X$ is the channel input, $Y$ is the channel output, $S$ is the channel state at the beginning of the transmission, and $S^\prime$ is the channel state at the end of the transmission.
The cardinalities $\mathcal{X},\mathcal{S}$ are assumed to be finite.
At each time $t$, the channel has the Markov property, that is, $P_{S_t,Y_t|X^t,S^{t-1},Y^{t-1}} = P_{Y_t,S_t|X_t,S_{t-1}}$.
A \gls{fsc} is called indecomposable if for every $\varepsilon>0$ there exists an $n_0\in\bN$ such that for $n\ge n_0$ we have $\sup_{s_0, s_0^\prime, s_n\in\cS, x^n\in \cX^n} \vert \f{P_{S_n|X^n,S_0}}{s_n|x^n,s_0}-\f{P_{S_n|X^n,S_0}}{s_n|x^n,s^\prime_0}\vert<\varepsilon$.

\subsection{Polar Codes for Symmetric Channels}
Let $G_N = B_N F^{\otimes n}$ be Arikan's polar transform with the generator matrix for block length $N=2^n$ for $n\in\bN$.
The matrix $B_N$ is the permutation matrix called bit-reversal and is given by the recursive relation $B_N = R_N(I_2 \otimes B_{\frac{N}{2}})$ starting from $B_2=I_2$. The term $I_N$ denotes the identity matrix of size $N$ and $R_N$ denotes a permutation matrix called reverse-shuffle \cite{arikan2009channel}. 
The term $A\otimes B$ denotes the Kronecker product of $A$ and $B$ when $A,B$ are matrices, and it denotes a tensor product whenever $A,B$ are distributions. The term $A^{\otimes N}:=A\otimes A\otimes\dots\otimes A$ denotes an application of the $\otimes$ operator $N$ times. 

\par We define a polar code by the tuple $\left(\cX, \cY, W, E^W, F, G, H\right)$ that contains the channel $W$, the channels embedding $E^W$ and the core components of the \gls{sc} decoder, $F,G,H$. We define the effective bit channels by the tuple $\left(W^{(i)}_N,\cX,\cX^{i-1}\times\cY^N\right)$ for all $i\in[N]$. 
The term $E^W:\cY\to\cE$ denotes the channel embedding, where $\cE\subset\bR^d$. It is also referred in the literature by the term channel statistics, but here for our purposes, we alternatively choose to call it the channel embedding. For example, for a memoryless channel $W:=W_{Y\vert X}$, a valid choice of $E^W$, as used in the remainder of this paper, is given by the following:
\begin{equation}\label{eqn:memoryless-channel-stats}
    E^W(y) = \log\frac{\f{W}{y|1}}{\f{W}{y|0}} + \log\frac{\f{P_{X}}{1}}{\f{P_{X}}{0}},
\end{equation}
where the second term in the \gls{rhs} cancels out in the case where $P_X$ is uniform.
\par The functions $F:\cE\times\cE\to\cE, \;G:\cE\times\cE\times\cX\to\cE$ denote the check-node and bit-node operations, respectively. We denote by $H:\cE\to[0,1]$ a mapping of the embedding into a probability value, i.e. a soft decision. For the choice of $E^W$ in Equation \eqref{eqn:memoryless-channel-stats}, $F,G,H$ are given by
\begin{align}\label{eqn:sc_ops}
    &F(e_1, e_2) =  2\tanh^{-1}\left(\tanh{\frac{e_1}{2}}\tanh{\frac{e_2}{2}}\right), \nn\\
    &G(e_1, e_2, u) =  e_2 + (-1)^{u}e_1, \nn\\
    &\hspace{0.1cm}H(e_1) = \sigma(e_1),
\end{align}
where $\sigma(x) = \frac{1}{1+e^{-x}}$ is the logistic function and $e_1,e_2\in\cE, u\in\cX$.
For this choice, the hard decision rule $h:[0,1]\to\cX$ is the round function $h(l) = \mathbb{I}_{l>0.5}$, where $\bI$ is the indicator function. Applying \gls{sc} decoding on the channel outputs yields an estimate of the transmitted bits and their corresponding posterior distribution \cite{arikan2009channel}. Specifically, after observing $y^N$, \gls{sc} decoding performs the map $(y^N, f^N) \mapsto \left\{\hat{u}_i,\f{P_{U_i|U^{i-1},Y^N}}{1| \hat{u}^{i-1},y^N}\right\}_{i\in[N]}$, where $f^N$ are the frozen bits that are shared between the encoder and the decoder. i.e. $f_i\in\{0,1\}$ if $i\in[N]$ is frozen, and $f_i=0.5$\footnote{The value $0.5$ is chosen arbitrarily to indicate that the bit needs to be decoded.} if $i$ is an information bit. This mapping is denote by 
\begin{equation}\label{eqn:sc_decode}
    \left\{\hat{u}_i,\f{P_{U_i|U^{i-1},Y^N}}{1| \hat{u}^{i-1},y^N}\right\}_{i\in[N]} = \f{\mathsf{SC}_\mathsf{decode}}{y^N, f^N}.
\end{equation}
For more details on \gls{sc} decoding, the reader may refer to \cite[Section VIII]{arikan2009channel}.

\par  For the case where $P_X$ is a uniform \gls{iid} distribution, we denote by 
$$\cA = \mathsf{SC}_\mathsf{design}\left(\cD_{M,N}, k, E^W, F, G, H\right)$$
the procedure of finding the set of good channels $\cA \subset [N]$ with $|\cA|=k$ over the sample $\cD_{M,N}$ with a \gls{sc} decoder that uses $E^W$, $F$, $G$, $H$ as its elementary operations. This amounts to applying $\mathsf{SC}_\mathsf{decode}$ on the $M$ blocks in $\cD_{M,N}$. In the design phase, we assume that both $u^n$ and $y^n$ are known to the decoder, and therefore $f^N=u^N$ in the design phase. Each application of \gls{sc} decoding yields in $\left\{\f{P_{U_i|U^{i-1},Y^N}}{1| u_{j,1}^{i-1},y_{j,1}^N}\right\}_{j\in[M],i\in[N]}$ (note that the conditioning is over the true bits $u^N$). For each $i\in[N]$, we compute the empirical average 
\begin{align}\label{eq:MI_est}
  \f{\what{\sI}}{W^{(i)}_N} = 1+\sum_{j=1}^M \log \f{P_{U_i|U^{i-1},Y^N}}{u_{j,i}| u_{j,1}^{i-1},y_{j,1}^N}  
\end{align}
to estimate the \gls{mi} of the effective bit channels. Note that \eqref{eq:MI_est} follows due to the fact that $\f{\sH}{U_i\vert U^{i-1}}=1$ and the second term is an estimate of $\f{\sH}{U_i\vert U^{i-1},Y^N}$ by the law of large numbers. This estimate is used to complete the polar code design by choosing $\cA\subset[N]$ with the highest values of $\left\{\f{\what{\sI}}{W^{(i)}_N}\right\}_{i\in[N]}$.

\subsection{Neural Networks and Universal Approximations}
\par The class of shallow NNs with fixed input and output dimensions is defined as follows \cite{schafer2006recurrent}. 
\begin{definition}[NN function class]\label{def:NN_function_class}
For the ReLU activation function
$\sigma_\mathsf{R}(x) = \max(x,0)$ and $d_i,d_o \in\bN$, define the class of neural networks with $k\in\bN$ neurons as:
\begin{equation}
    \cG_\mathsf{NN}^{(d_i,k,d_o)}:=\left\{g:\bR^{d_i}\to\bR^{d_o}: g(x)=\sum_{j=1}^k\beta_j \sigma_\sR( \mathrm{W}_j x+b_j),\ x\in\bR^{d_i} \right\},\label{eq:NN_def}
\end{equation}
where $\sigma_\sR$ acts component-wise, $\beta_j\in\bR, \mathrm{W}_{j}\in \bR^{d_o \times d_i}$ and $ b_j\in\bR^{d_o}$ are the parameters of $g\in \cG_\mathsf{NN}^{(d_i,k,d_o)}$. 
Then, the class of NNs with input and output dimensions $(d_i,d_o)$ is given by
\begin{equation}
    \cG_\mathsf{NN}^{(d_i,d_o)} := \bigcup_{k\in\bN} \cG_\mathsf{NN}^{(d_i,k,d_o)},\label{eq:grnn_union}
\end{equation}
and the class of \glspl{nn} is given by $\cG_\mathsf{NN} := \bigcup_{d_i,d_o\in\bN} \cG_\mathsf{NN}^{(d_i,d_o)}$.
\end{definition}

\noindent \Glspl{nn} form a universal approximation class under mild smoothness conditions \cite{hornik1989multilayer}. The following theorem specifies
the conditions for which \glspl{nn} are universal approximators \cite[Corollary~1]{schafer2006recurrent}.
\begin{theorem}[Universal approximation of NNs]\label{lemma:uni_approx_nn_output}
Let $\sC(\cX,\cY)$ be the class continuous functions $f:\cX\to\cY$ where $\cX\subset\bR^{\mathsf{d}_i}$ is compact and $\cY\subseteq\bR^{\mathsf{d}_o}$.
Then, the class of NNs $\cG_\mathsf{NN}^{(\mathsf{d}_i,\mathsf{d}_o)}$ is dense in $\sC(\cX,\cY)$, i.e., for every $f\in\sC(\cX,\cY)$ and $\varepsilon>0$, there exist $g\in\cG_\mathsf{NN}^{(\mathsf{d}_i,\mathsf{d}_o)}$ such that
$
\|f-g\|_\infty\leq\epsilon.
$
\end{theorem}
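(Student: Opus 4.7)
The plan is to reduce the vector-output statement to the scalar-output case and then invoke a classical universal approximation theorem tailored to the ReLU activation.

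First I would decompose $f = (f_1,\dots,f_{\mathsf{d}_o})$ and approximate each continuous scalar component $f_\ell: \cX \to \bR$ separately. Supposing each admits a shallow scalar ReLU network $g_\ell$ with $\sup_{x\in\cX}|f_\ell(x) - g_\ell(x)| < \varepsilon/\sqrt{\mathsf{d}_o}$, I would assemble them into a single element of $\cG_\mathsf{NN}^{(\mathsf{d}_i,\mathsf{d}_o)}$ by the following block-structured embedding: every hidden unit of $g_\ell$, of the form $\beta_j^{(\ell)} \sigma_\mathsf{R}(w_j^{(\ell)} x + b_j^{(\ell)})$, is replaced by the unit $\beta_j \sigma_\mathsf{R}(\mathrm{W}_j x + b_j)$ with $\beta_j := \beta_j^{(\ell)}$, $\mathrm{W}_j \in \bR^{\mathsf{d}_o \times \mathsf{d}_i}$ whose row $\ell$ equals $w_j^{(\ell)}$ and whose other rows are zero, and $b_j \in \bR^{\mathsf{d}_o}$ whose coordinate $\ell$ equals $b_j^{(\ell)}$ and whose other coordinates are zero. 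Since $\sigma_\mathsf{R}(0)=0$, this unit contributes only to output coordinate $\ell$, so concatenating all such units over $\ell \in [\mathsf{d}_o]$ produces a network in $\cG_\mathsf{NN}^{(\mathsf{d}_i,\mathsf{d}_o)}$ whose $\ell$-th output coincides with $g_\ell$, and the triangle inequality then yields $\|f - g\|_\infty \le \varepsilon$.

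For the scalar-output case, I would invoke the Leshno--Lin--Pinkus--Schocken theorem, which asserts that single-hidden-layer networks with a locally bounded, piecewise continuous, non-polynomial activation are dense in $\sC(\cX,\bR)$ for every compact $\cX \subset \bR^{\mathsf{d}_i}$. The ReLU function $\sigma_\mathsf{R}(x) = \max(x,0)$ is continuous, locally bounded, and not a polynomial, so the hypothesis holds and density follows. An alternative route is the direct constructive one: exploit uniform continuity of $f$ on the compact $\cX$ to pick a grid fine enough that a piecewise linear interpolant approximates $f$ within $\varepsilon/2$, and then realize this interpolant (or a sigmoidal smoothing of it) by a shallow ReLU network, using that a sigmoidal activation can itself be approximated to arbitrary accuracy by scaled differences of ReLU units.

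The main obstacle is the scalar, multivariate-input step. In one dimension, shallow ReLU networks coincide exactly with continuous piecewise linear functions, so density is essentially immediate from uniform approximation by piecewise linear interpolants. In dimensions $\mathsf{d}_i \ge 2$, however, a single-hidden-layer ReLU network represents only a restricted subclass of continuous piecewise linear functions, whose ridges must align with affine hyperplanes in the superposition form, so a direct geometric construction is not readily available. This is exactly the difficulty that the Leshno et al.\ theorem resolves by a Fourier/duality argument; invoking it therefore bypasses the need for an explicit geometric construction and, combined with the stacking construction above, completes the proof.
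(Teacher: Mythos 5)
The paper does not prove this theorem; it is stated as a known result imported verbatim from the cited reference (Corollary 1 of Sch\"afer and Zimmermann, 2006), so there is no proof in the paper to compare against. Your argument is nevertheless sound and self-contained: the block-diagonal stacking is valid under the paper's particular parameterization in Definition~\ref{def:NN_function_class} (where each of the $k$ units already emits a $\mathsf{d}_o$-dimensional vector, so placing the scalar weights in row $\ell$ and zeros elsewhere and using $\sigma_\mathsf{R}(0)=0$ correctly isolates each coordinate), and the scalar case is indeed covered by Leshno--Lin--Pinkus--Schocken since ReLU is continuous, locally bounded, and non-polynomial. Two minor remarks: the $\varepsilon/\sqrt{\mathsf{d}_o}$ budget is the conservative choice appropriate for the Euclidean norm on $\bR^{\mathsf{d}_o}$ and could be relaxed to $\varepsilon$ if $\|\cdot\|_\infty$ means coordinatewise sup; and you are right not to worry about the range constraint $\cY\subseteq\bR^{\mathsf{d}_o}$, since the theorem only requires $g\in\cG_\mathsf{NN}^{(\mathsf{d}_i,\mathsf{d}_o)}$ to be $\varepsilon$-close to $f$, not to take values in $\cY$.
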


\section{Data-Driven Polar codes for Memoryless Channels}\label{sec:data_driven_polar_code:memoryless}

\par This section focuses on designing data-driven polar codes for memoryless channels, serving as a foundational step towards addressing channels with memory. Although the algorithm presented is intended as a precursor for channels with memory—as detailed in Section \ref{sec:data_driven_polar_code:memory}—its derivation here is influenced by the eventual data-driven polar code for channels with memory. Consequently, the algorithm is not primarily aimed at enhancing existing methods for memoryless channels. Instead, its design illuminates the components essential for the data-driven polar decoder discussed in Section \ref{sec:data_driven_polar_code:memory}.

\par Let $W:=W_{Y|X}$ be a binary-input memoryless channel. Consider $\cD_{MN} \sim \left(P_{X} \otimes W_{Y|X}\right)^{\otimes MN}$ as a finite sample of its input-output pairs, with $\f{P_X}{0}=\f{P_X}{1}=0.5$.
The \gls{sc} decoding algorithm transforms the channel embedding, as detailed in \eqref{eqn:memoryless-channel-stats}, into the effective bit channels embedding using recursive formulas from \cite[Prop. 3]{arikan2009channel}. Notably, while the \gls{sc} decoder necessitates the explicit channel embedding $E^W$, the channel transition kernel remains unknown in data-driven scenarios.
To tackle this challenge, we utilize the \gls{mine} algorithm \cite{belghazi2018mine} to estimate both the channel embedding and its maximum achievable rate. Given $\cD_{MN}$, the \gls{mine} algorithm approximates $\f{\sI}{X;Y}$ using the \gls{dv} variational formula for \gls{kl} divergences. This approximation results in an estimation of the symmetric capacity (owing to the uniformity of $P_X$) as 
\begin{equation}\label{eqn:mine}
\f{\sI_\Phi}{\cD_{MN}} = \max_{\phi\in\Phi} \frac{1}{MN}\sum_{i=1}^{MN} \f{T_\phi}{x_i,y_i}-\log \frac{1}{MN}\sum_{i=1}^{MN} e^{\f{T_\phi}{x_i,\tilde{y}_i}},
\end{equation}
where $\tilde{y}^N$ represents a random shuffle of $y^N$, $T_\Phi$ is the estimated maximizer from the \gls{dv} formula, and $\Phi$ is a compact parameter space for the \gls{nn}. We represent the \gls{mine} algorithm as $T_\Phi=\f{\mathsf{MINE}}{\cD_{MN}}$.

\begin{figure}[t]
  \centering
  \begin{minipage}{0.45\textwidth}
    \centering
    \begin{algorithm}[H]
    \caption{Data-driven polar code for memoryless channels}
    \label{alg:memoryless}
    \textbf{input:} 
    Dataset $\cD_{M,N}$, \#of info. bits $k$ \\
    \textbf{output:} Clean set $\cA$
    \algrule
    \begin{algorithmic}
    \State $T_\Phi = \f{\mathsf{MINE}}{\cD_{MN}}$
    \State $E^W_\Phi = \f{T_\Phi}{1, \cdot} - \f{T_\Phi}{0, \cdot}$
    \State $\cA = \f{\mathsf{SC}_\mathsf{design}}{\cD_{M,N}, k, E^W_\Phi, F, G, \sigma}$
    \end{algorithmic}
    \end{algorithm}
  \end{minipage}
  \label{fig:algorithms-memorlyess}
\end{figure}

\par The optimal solution of the \gls{dv} formula is given by $T^\ast(x,y) = \log\frac{W(y|x)}{\frac{1}{2}W(y|0) + \frac{1}{2}W(y|1)}+c$ for $c\in\bR$ \cite{donsker1983asymptotic}. 
This connects $T^\ast$ and $E^W$ through the relation
\begin{equation}\label{eqn:mine-channel-stats-relation}
E^W(y) =  T^\ast(1,y)- T^\ast(0,y).
\end{equation}
Therefore, when the statistics of the channel are not known, the \gls{mine} algorithm's output is used as a proxy for $E^W(y)$ via Eq. \eqref{eqn:mine-channel-stats-relation}, i.e. $\f{E^W_\Phi}{y} = \f{T_\Phi}{1, y} - \f{T_\Phi}{0,y}$. 
This process is outlined in Algorithm \ref{alg:memoryless}. 
The following theorem states that Algorithm \ref{alg:memoryless} induces a consistent estimate of $\f{\sH}{U_i\vert U^{i-1}, Y^N}$ for memoryless channels.

\begin{theorem}[Successive Cancellation Decoding with \gls{mine} for Memoryless Channels]\label{thm:mine_cons}
    Let $\cD_{M,N} \sim \left(P_{X} \otimes W_{Y|X}\right)^{\otimes MN}$, where $N=2^n,\;M,n\in \bN$. Let  $u_{j,i} = (x_{j,1}^{N} G_N)_i$. Let $E_\Phi$ as defined in Equation \eqref{eqn:mine-channel-stats-relation}. Then, for every $\varepsilon>0$ there exists $p\in\bN$, compact $\Phi \subset\bR^p$ and $m\in\bN$ such that for $M>m$ and $i\in[N]$, $\bP-a.s.$
    \begin{align}\label{eqn:mine_cons}
        \left| \f{\sH_{\Phi}^M}{U_i\vert U^{i-1}, Y^N} -\f{\sH}{U_i\vert U^{i-1}, Y^N}\right| < \varepsilon,
    \end{align}
    where 
    \begin{align}\label{eqn:mine-nll}
         \f{\sH_{\Phi}^M}{U_i\vert U^{i-1}, Y^N} = -\frac{1}{M}\sum_{j=1}^M \log \f{P^{\Phi}_{U_i\vert U^{i-1}, Y^N}}{u_{j,i}\vert u_{j,1}^{i-1},y_{j,1}^{N}},
    \end{align}
    and $\f{P^{\Phi}_{U_i\vert U^{i-1}, Y^N}}{u_{j,i}\vert u_{j,1}^{i-1},y_{j,1}^{N}}$ is obtained by applying \gls{sc} decoding (using $F,G,H$ as defined in Equation \ref{eqn:sc_ops}) with inputs $\left\{\f{E^W_\Phi}{y_{j,i}}\right\}_{i=1}^N$ instead of $\left\{\f{E^W}{y_{j,i}}\right\}_{i=1}^N$.
\end{theorem}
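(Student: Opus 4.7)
The plan is to decompose the gap via the triangle inequality,
\begin{equation*}
\left|\f{\sH_\Phi^M}{U_i\vert U^{i-1},Y^N} - \f{\sH}{U_i\vert U^{i-1},Y^N}\right| \le \left|\sH_\Phi^M - \sH_\Phi\right| + \left|\sH_\Phi - \sH\right|,
\end{equation*}
where $\sH_\Phi := \bE\bigl[-\log P^\Phi_{U_i\vert U^{i-1},Y^N}(U_i\vert U^{i-1},Y^N)\bigr]$ is the population version of the empirical average in \eqref{eqn:mine-nll} under the true joint law of $(U^N,Y^N)$. The first summand is a pure sampling error for a \emph{fixed} $\Phi$, while the second is an approximation error measuring how far the MINE-induced channel embedding $E^W_\Phi$ is from the true $E^W$, as seen through the SC-decoder-induced posteriors. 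I would first pick $p$ and $\Phi$ so that the approximation term is below $\varepsilon/2$, and then pick $M$ large enough to drive the sampling term below $\varepsilon/2$ almost surely.

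For the sampling term, observe that for fixed $\Phi$ the summands in \eqref{eqn:mine-nll} are iid across the block index $j$, since each depends only on $(u_{j,1}^N,y_{j,1}^N)$. Compactness of $\Phi$ together with continuity of $T_\phi$ and of the SC operations in \eqref{eqn:sc_ops} keeps the per-block log-likelihood bounded by an integrable envelope, so the strong law of large numbers yields $|\sH_\Phi^M - \sH_\Phi| \to 0$ almost surely.

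For the approximation term, I would first argue $E^W_\Phi \to E^W$ uniformly on a sufficiently large compact subset of $\cY$. The DV maximizer is $T^{\ast}(x,y) = \log\frac{W(y\vert x)}{\frac{1}{2}W(y\vert 0)+\frac{1}{2}W(y\vert 1)}$ up to an additive constant, which by \eqref{eqn:mine-channel-stats-relation} pins down $E^W$. Invoking Theorem~\ref{lemma:uni_approx_nn_output} together with a standard Glivenko--Cantelli-type uniform convergence of the empirical DV functional over a compact parameter set $\Phi$, I can choose $p$ and $\Phi$ so that $\|E^W_\Phi - E^W\|_\infty < \delta$ on the chosen compact set, for any prescribed $\delta>0$. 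Then, because $F$, $G$, and $H$ in \eqref{eqn:sc_ops} are smooth on bounded regions and SC decoding is a finite composition of these over the depth-$n$ butterfly, the map from embeddings to posteriors $P^\Phi_{U_i\vert U^{i-1},Y^N}$ is continuous; identifying $\sH_\Phi - \sH$ with the conditional KL divergence $\bE[\kl{P_{U_i\vert U^{i-1},Y^N}}{P^\Phi_{U_i\vert U^{i-1},Y^N}}]$ and applying dominated convergence then bounds $|\sH_\Phi - \sH|$ by $\varepsilon/2$.

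The main obstacle is controlling the error amplification through the $n=\log_2 N$ levels of the SC recursion, because $F$ involves $\tanh^{-1}$ whose local Lipschitz constant diverges as its argument approaches $\pm 1$, equivalently as embeddings become large in magnitude. I would handle this by restricting the analysis to embeddings of bounded magnitude: either assuming bounded log-likelihood ratios, or clipping $E^W_\Phi$ at $\pm L$ with $L$ large enough that $\bP(|E^W|>L)$ contributes negligibly to the conditional entropy. After this truncation the finite-depth composition produces an explicit modulus of continuity in $\delta$. A secondary subtlety is that MINE controls the \emph{value} of the DV objective rather than the \emph{pointwise} behavior of its maximizer; I would close this gap via the strict concavity of the DV functional, converting objective-value closeness into $L^2(P_{XY})$-closeness of $T_\Phi$ to $T^{\ast}$, which combined with the continuity argument above yields the required uniform closeness of embeddings.
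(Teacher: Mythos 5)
Your high-level route is the same one the paper takes: use MINE consistency to get $T_\Phi$ close to $T^\ast$, propagate this to closeness of $E^W_\Phi$ and hence (via continuity of the finite-depth SC recursion) of the induced posteriors, and conclude that the empirical conditional entropy estimate converges. Your Gibbs-density / ``strict concavity'' step is in fact exactly the paper's argument: the paper defines $G_\Phi$ and shows the DV gap equals $\kl{P}{G_\Phi}$, which is the concavity-gap identity you have in mind.

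Where you genuinely improve on the paper: you split $\sH_\Phi^M - \sH$ into a sampling term $\sH_\Phi^M - \sH_\Phi$ and an approximation term $\sH_\Phi - \sH$, whereas the paper jumps from ``$E^W_\Phi$ close to $E^W$'' directly to ``$\sH_\Phi^M$ close to $\sH$'' without separating the law-of-large-numbers concentration of the $M$-sample empirical average from the bias introduced by $E^W_\Phi \neq E^W$; your decomposition is the cleaner way to make that step precise. You also correctly flag that the paper's passage from $\kl{P}{G_\Phi}<\varepsilon$ to $\lvert T^\ast - T_\Phi\rvert < \varepsilon'$ does not follow from continuity alone (continuity gives the wrong implication direction); the KL bound gives you at best an $L^p(P_{XY})$-type closeness, and you then need either a compactness/uniform-integrability argument or the truncation you describe to control the unbounded domain and the locally unbounded Lipschitz constant of $F$ through the $n$ recursion levels. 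For fixed $N$ the depth is finite so no geometric blowup occurs, but the truncation is still needed to handle unbounded $\cY$ (e.g.\ AWGN). In short: same skeleton, but your version patches two real soft spots in the paper's writeup.
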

\noindent The proof of Theorem \ref{thm:mine_cons} follows by two arguments. The first is the consistency of the \gls{mine} algorithm \cite[Theorem 2]{belghazi2018mine}. The second exploits the continuity of the \gls{sc} decoder to deduce the consistency of $\f{\sH_{\Phi}^M}{U_i\vert U^{i-1}, Y^N}$. The proof is given in Section \ref{sec:proofs}.

\begin{remark}
    The condition stated in Equation \eqref{eqn:mine_cons} is sufficient to ensure that Algorithm 1 produces a polar code such that for any $N\in\bN$ there exists $M\in\bN$ for which the block error rate is $O(2^{N^{-\beta}})$, where $\beta<\frac{1}{2}$, as the analytic polar code in \cite{arikan2009rate}.
    The law of the large numbers implies that $\lim_{M\to\infty}\f{\sH_{\Phi}^M}{U_i\vert U^{i-1}, Y^N}=\ce{P_{U_i\vert U^{i-1}, Y^N}}{P^{\Phi}_{U_i\vert U^{i-1}, Y^N}}$. This is equivalent to $$\lim_{M\to\infty}\f{\sH_{\Phi}^M}{U_i\vert U^{i-1}, Y^N}-\f{\sH}{U_i\vert U^{i-1}, Y^N}=\ckl{P_{U_i\vert U^{i-1}, Y^N}}{P^{\Phi}_{U_i\vert U^{i-1}, Y^N}}{P_{U^{i-1}, Y^N}}.$$ 
    Theorem \ref{thm:mine_cons} suggests that $\lim_{M\to\infty}\ckl{P_{U_i\vert U^{i-1}, Y^N}}{P^{\Phi}_{U_i\vert U^{i-1}, Y^N}}{P_{U^{i-1}, Y^N}}=0$ which implies that $P_{U_i\vert U^{i-1}, Y^N}$ and $P^{\Phi}_{U_i\vert U^{i-1}, Y^N}$ are equal almost everywhere. Thus, their corresponding Bhattacharyya parameters are equal and consequently have the same block error rate \cite[Theorem 3]{arikan2009channel}.
\end{remark}

\section{Data-Driven Polar codes for Channels with Memory}\label{sec:data_driven_polar_code:memory}
\par This section presents the data-driven methodology for the estimation of a neural polar decoder for channels with memory. In this case, both the channel embedding and the polar decoder needs to be estimated. The section starts with the \gls{nsc} decoder's definition, and then presents an algorithm that optimizes the \gls{nsc} decoder and the channel embedding jointly. Next, it presents neural estimation methods that allow the estimation of the channel embedding independently from the \gls{nsc} decoder. It is concluded by the theoretical guarantees of the \gls{nsc}.

\subsection{Neural Successive Cancellation Decoder}\label{sec:data_driven_polar_code:memory:nsc}
\par The following definition defines the \gls{nsc} decoder on the basis of the \gls{sc} decoder, as appears in \cite{arikan2009channel}. Specifically, it uses the structure of the \gls{sc} decoder and replaces its elementary operations by \glspl{nn}. 

\begin{definition}[Neural Successive Cancellation Decoder]\label{def:nsc}
    Let $E^W_\phi\in \cG_\mathsf{NN}$ be a channel embedding satisfying $E^W_\phi:\cY\to\cE$, $\cE\subset \bR^d$. A \gls{nsc} is defined by $F_{\theta_1}, G_{\theta_2}, H_{\theta_3}\in\cG_\mathsf{NN}$ with parameters $\theta=\{\theta_1, \theta_2, \theta_3\}$,  $\theta\in\Theta$. The \glspl{nn} satisfy:
    \begin{itemize}
        \item $F_\theta:\cE\times\cE\to\cE$ is the check-node \gls{nn}.
        \item $G_\theta:\cE\times\cE\times\cX\to\cE$ is the bit-node \gls{nn}.
        \item $H_\theta:\cE\to[0,1]$ is the soft decision \gls{nn}.
    \end{itemize}
    Application of \gls{sc} decoding, as defined in Equation \eqref{eqn:sc_decode}, with the functions $E_\phi, F_\theta, G_\theta, H_\theta$\footnote{instead of $E^W,F,G,H$ as defined in Equations \eqref{eqn:memoryless-channel-stats},\eqref{eqn:sc_ops}.} yields $P^{\phi,\theta}_{U_i\vert U^{i-1}, Y^N}$, that is an estimate of $P_{U_i\vert U^{i-1}, Y^N}$. Let
    \begin{equation}
        \f{\sH_{\phi,\theta}}{U_i\vert U^{i-1}, Y^N} = \ce{P_{U_i\vert U^{i-1}, Y^N}}{P^{\phi,\theta}_{U_i\vert U^{i-1}, Y^N}}
    \end{equation}
    be the \gls{ce} between $P_{U_i\vert U^{i-1}, Y^N}$ and $P^{\phi,\theta}_{U_i\vert U^{i-1}, Y^N}$.
\end{definition}

\par The goal of the reminder of this section is to describe how to train the parameters of the \gls{nsc}. 
Training the \gls{nsc} amounts into optimizing $\phi, \theta$ such that the symmetric capacities of the effective bit channels $\f{\sI}{W^{(i)}_N}$ are estimated. It follows that 
\begin{align}
    \f{\sI}{W^{(i)}_N} &= 1 - \f{\sH}{U_i\vert U^{i-1}, Y^N}, 
\end{align}
due to $\f{\sH}{U_i\vert U^{i-1}} = 1$ since $U_i\stackrel{iid}{\sim} \f{\mathsf{Ber}}{0.5}$.
Hence, we set the goal of estimating $P_{U_i\vert U^{i-1}, Y^N}$ as the goal needed to identify the clean effective bit channels. This implies that minimizing \gls{ce} between $P_{U_i\vert U^{i-1}, Y^N}$ and $P^{\phi,\theta}_{U_i\vert U^{i-1}, Y^N}$ is a valid objective for the optimization of $\phi,\theta$. However, in the data-driven scenario, the true distribution $P_{U_i\vert U^{i-1}, Y^N}$ is not known and therefore the \gls{ce} is computed by the negative-log-loss, which is computed exclusively by $P^{\phi,\theta}_{U_i\vert U^{i-1}, Y^N}$. 
The following definition presents the objective for the optimization of the \gls{nsc} parameters $\theta$ and the channel embedding $\phi$.

\begin{definition}[Optimization Objective of the \gls{nsc}]\label{def:nsc_opt}
    Let $\cD_{M,N} \sim P_{X^{\otimes MN}} \otimes W_{Y^{MN}\| X^{MN}}$, where $N$ is the block length and $M\in\bN$ is the number of blocks. Let $u_{j,i} = (x_{j,1}^{N} G_N)_i$. Then, for all $i\in[N]$
    \begin{equation}\label{eqn:ce_nsc_loss}
        \f{\sH_{\Phi,\Theta}^M}{U_i\vert U^{i-1}, Y^N} = \min_{\theta\in\Theta, \phi\in\Phi} \left\{ -\frac{1}{M}\sum_{j=1}^M \log \f{P^{\phi, \theta}_{U_i\vert U^{i-1}, Y^N}}{u_{j,i}\vert u_{j,1}^{i-1},y_{j,1}^{N}}\right\}
    \end{equation}
    is the objective for training the \gls{nsc}.
\end{definition}
\begin{figure}[b]
    \centering
    \resizebox{0.5\textwidth}{!}{
    \tikzstyle{gn} = [rectangle,
                    node distance=1.5cm,
                    minimum width=1.5cm, 
                    minimum height=1.25cm,
                    minimum size=1.25cm,
                    text centered, 
                    draw=black,thin,
                    scale=1.0]
\tikzstyle{io} =    [node distance=0.5cm,
                    minimum height=0.25cm,
                    text centered,
                    scale=\scale]

\tikzset{XOR/.style={draw,circle,append after command={
        [shorten >=\pgflinewidth, shorten <=\pgflinewidth,]
        (\tikzlastnode.north) edge (\tikzlastnode.south)
        (\tikzlastnode.east) edge (\tikzlastnode.west)}}}

\tikzset{
    pics/nnops/.style=
    {
        code=
        {
        \node(#1)[gn] at (0,0){} ;
        \node (a) [io, at={($(#1.west)+(0.75em,1em)$)}, scale=1.5, thick] {$\scriptstyle F_\theta$};
        \node (b) [io, at={($(#1.west)+(0.75em,-1em)$)}, scale=1.5, thick] {$\scriptstyle G_\theta$};
        }
    }
} 

\def\scale{0.825}
\begin{tikzpicture}[
node distance = 3cm, 
auto,
]

\pic [] {nnops=Gn11};
\pic [right of=Gn11, xshift=1.cm] {nnops=Gn12};
\pic [below of=Gn11, xshift=0.cm, yshift=0.25cm] {nnops=Gn21};
\pic [right of=Gn21, xshift=1.cm, yshift=0.0cm] {nnops=Gn22};

\draw[->] ([yshift=0.28cm] Gn11.west) -- ++ (-3em, 0) node[anchor=east, pos=1, scale=\scale, pin={[pin edge={black, dashed, ->}, scale=\scale]90:$\celoss{u_{2,1}}{e_{2,1}}$}] {${u_{2,1}},{e_{2,1}}$};
\draw[->] ([yshift=-0.25cm]Gn11.west) -- ++ (-3em, 0) node[anchor=east, pos=1, scale=\scale, pin={[pin edge={black, dashed, ->}, scale=\scale]270:$\celoss{u_{2,2}}{e_{2,2}}$}] {${u_{2,2}},{e_{2,2}}$};
\draw[->] ([yshift=0.28cm] Gn21.west) -- ++ (-3em, 0) node[anchor=east, pos=1, scale=\scale, pin={[pin edge={black, dashed, ->}, scale=\scale]90:$\celoss{u_{2,3}}{e_{2,3}}$}] {${u_{2,3}},{e_{2,3}}$};
\draw[->] ([yshift=-0.25cm]Gn21.west) -- ++ (-3em, 0) node[anchor=east, pos=1, scale=\scale, pin={[pin edge={black, dashed, ->}, scale=\scale]270:$\celoss{u_{2,4}}{e_{2,4}}$}] {${u_{2,4}},{e_{2,4}}$};

\draw[<-] ([yshift=0.28cm]Gn11.east)  -- node[anchor=south, pos=0.75, scale=\scale, pin={[pin edge={black, dashed, ->}, scale=\scale]90:$\celoss{u_{1,1}}{e_{1,1}}$}] {${u_{1,1}},{e_{1,1}}$} ([yshift=0.28cm]Gn12.west);
\draw[<-] ([yshift=0.28cm]Gn21.east)  --++ (0.5,0.0) --++ (0.75,2.215cm) -- node[anchor=south, xshift=-0.0cm, yshift=-0.0cm, scale=\scale, pos=0.6, pin={[pin edge={black, dashed, ->}, scale=\scale]270:$\celoss{u_{1,2}}{e_{1,2}}$}] {${u_{1,2}},{e_{1,2}}$} ([yshift=-0.25cm]Gn12.west);
\draw[<-] ([yshift=-0.25cm]Gn11.east)  --++ (0.5,0.0) --++ (0.75,-2.215cm) -- node[anchor=north, xshift=-0cm, yshift=0.0cm, scale=\scale, pos=0.55, pin={[pin edge={black, dashed, ->}, scale=\scale]90:$\celoss{u_{1,3}}{e_{1,3}}$}] {${u_{1,3}},{e_{1,3}}$} ([yshift=0.28cm]Gn22.west);
\draw[<-] ([yshift=-0.25cm]Gn21.east)  -- node[anchor=north, xshift=0.0cm, yshift=0.00cm, scale=\scale, pos=0.75, pin={[pin edge={black, dashed, ->}, scale=\scale]270:$\celoss{u_{1,4}}{e_{1,4}}$}] {${u_{1,4}},{e_{1,4}}$} ([yshift=-0.25cm]Gn22.west);

\draw[<-] ([yshift=0.28cm] Gn12.east)  -- ++ (3em, 0) node[right, scale=\scale, pin={[pin edge={black, dashed, ->}, scale=\scale]90:$\celoss{u_{0,1}}{e_{0,1}}$}] {${u_{0,1}},{e_{0,1}}$};
\draw[<-] ([yshift=-0.25cm]Gn12.east)  -- ++ (3em, 0) node[right, scale=\scale, pin={[pin edge={black, dashed, ->}, scale=\scale]270:$\celoss{u_{0,2}}{e_{0,2}}$}] {${u_{0,2}},{e_{0,2}}$};
\draw[<-] ([yshift=0.28cm] Gn22.east)  -- ++ (3em, 0) node[right, scale=\scale, pin={[pin edge={black, dashed, ->}, scale=\scale]90:$\celoss{u_{0,3}}{e_{0,3}}$}] {${u_{0,3}},{e_{0,3}}$};
\draw[<-] ([yshift=-0.25cm]Gn22.east)  -- ++ (3em, 0) node[right, scale=\scale, pin={[pin edge={black, dashed, ->}, scale=\scale]270:$\celoss{u_{0,4}}{e_{0,4}}$}] {$e{u_{0,4}},{e_{0,4}}$};

\draw[-,dashed, semithick] ($(Gn11.south)-(3.5, 0.75)$) -- ($(Gn12.south)-(-3.5, 0.75)$);
\draw[-,dashed, semithick] ($(Gn11.east)+(0.88, 1.75)$) -- ($(Gn21.east)+(0.88, -1.75)$);
\draw[-,dashed, semithick] ($(Gn12.east)+(0.5, 1.75)$) -- ($(Gn22.east)+(0.5, -1.75)$);

\end{tikzpicture}
    }
    \caption{A visualization of Algorithm \ref{alg:training-nsc} for $N=4$. $L^\theta_\mathsf{ce}(e,u)$ denotes a single cross-entropy term, and the overall training loss $\mathsf{L}$ is calculated as the sum of all terms shown in the figure.}
    \label{fig:nsc-training}
\end{figure}

\par The explicit computation of $P^{\phi, \theta}_{U_i\vert U^{i-1}, Y^N}$ uses the recursive structure of the \gls{sc} decoder.
For each block $j\in[M]$ in $\cD_{M,N}$, the channel inputs and outputs $x_{j,1}^N,y_{j,1}^N$ are selected and $u_{j,1}^N$ is computed by $u_{j,1}^N=x_{j,1}^NG_N$.
For simplicity, we neglect the index of the block and focus on a single block, i.e. we simplify the notation into $x_{1}^N,y_{1}^N,u_{1}^N$. 
Let $e_{l,i}$ denotes the embedding of the $i$-th bit at the $l$-th decoding depth and $e_{l,1}^N$ denotes all the embedding at the $l$-th decoding depth. E.g. $e_{0,i}$ denotes the embedding of $X_i$ and $e_{\log_2(N), i}$ denotes the embedding of $U_i$. Accordingly, $\f{P^{\phi, \theta}_{U_i\vert U^{i-1}, Y^N}}{1|u^{i-1},y^N} = \f{H_\theta}{e_{\log_2(N),i}}$ by applying the soft-decision \gls{nn}. 

After observing $y^N$, the channel embedding are computed by $e_{0,i}=\f{E_\phi}{y_i}$. In the next step, the frozen bits $f_1^N=u_1^N$ and $e_{0,1}^N$ are used to compute the loss of the \gls{nsc}, as appears in Definition \ref{def:nsc_opt}. The loss computation is performed by a recursive function that is based on the recursion of the \gls{sc} decoder. Specifically, instead of decoding bits at each leaf of the recursion, here a loss term of entire loss is accumulated. The recursion starts with a loss accumulator initiated by $\mathsf{L}=0$. Then, the \gls{nsc} decoder starts the recursive computation of the effective bit channels, exactly as in \gls{sc} decoding, until reaching the first leaf of the recursion. Upon reaching the first leaf, the first loss term of the \gls{nsc} is accumulated into $\sL$. Precisely, a loss term $L^\theta_\mathsf{ce}(e_{\log_2(N), 1}, u_1)$ is computed via the formula of the binary \gls{ce}
\begin{equation}\label{}
    L^\theta_\mathsf{ce}(e, u) = -u\f{\log}{\f{H_\theta}{e}}-(1-u)\f{\log}{1-\f{H_\theta}{e}}.
\end{equation}
That is the binary \gls{ce} between $\f{P_{U_1\vert Y^N}}{\cdot|y^N}$ and $\f{P^{\phi, \theta}_{U_1\vert Y^N}}{\cdot|y^N}$. 
In the same manner, at each leaf of the recursion, additional loss term is accumulated to $\sL$. That is, each time reaching a leaf, $\sL$ is updated according to the following rule 
\begin{equation}
\mathsf{L} = \mathsf{L} + L^\theta_\mathsf{ce}(e_{\log_2(N),i}, u_i),\;\;i\in[N].
\end{equation}
\begin{figure}[h!]
  \centering
  \begin{minipage}{0.45\textwidth}
    \centering
    \begin{algorithm}[H]
        \caption{ $\mathsf{NSCLoss}(\mathbf{e},\mathbf{u}, \mathsf{L})$}
        \label{alg:training-nsc}
        \begin{algorithmic}
        \State $N=\f{\mathsf{dim}}{\mathbf{u}}$
        \If{$N=1$}
            \State $\mathsf{L} = \mathsf{L} +L^\theta_\mathsf{ce}(\mathbf{e},\mathbf{u})$ \Comment{Loss at a leaf}\\
            \Return $\mathsf{L},\mathbf{u}$
        \EndIf
        \State Split $\mathbf{e}$ into even and odd indices $\mathbf{e}_e, \mathbf{e}_o$
        \State $\mathbf{e}_{\mathsf{C}} = \f{F_\theta}{\mathbf{e}_e, \mathbf{e}_o}$ \Comment{Check-node}
        \State  $\mathsf{L},\mathbf{v}_1 = \f{\mathbf{\mathsf{NSCLoss}}}{\mathbf{e}_{\mathsf{C}}, \mathbf{u}_{1}^{N/2}, \mathsf{L}}$
        \vspace{0.1em}
        \State $\mathbf{e}_{\mathsf{B}} = \f{G_\theta}{\mathbf{e}_e, \mathbf{e}_o, \mathbf{v}_1}$ \Comment{Bit-node}
        \vspace{0.1em}
        \State $\mathsf{L},\mathbf{v}_2 = \f{\mathbf{\mathsf{NSCLoss}}}{\mathbf{e}_{\mathsf{B}}, \mathbf{u}_{N/2+1}^{N}, \mathsf{L}}$
        \State $\mathbf{v} = [\mathbf{v}_1, \mathbf{v}_2]$
        \State $\mathbf{v} = \mathbf{v} \left(I_{N/2} \otimes G_2\right) R_N$ \Comment{Bits in current depth}
        \State $\mathsf{L} = \mathsf{L} + \sum_{i=1}^N L^\theta_\mathsf{ce}(e_i, v_i)$ \Comment{Loss in current depth}\\
        \Return $\mathsf{L},\mathbf{v}$
        \end{algorithmic}
    \end{algorithm}
  \end{minipage}\hfill
  \begin{minipage}{0.45\textwidth}
    \centering
    \begin{algorithm}[H]
        \caption{Data-driven polar code for channels with memory}
        \label{alg:memory}
        \textbf{input:} 
        Dataset $\cD_{M,N}$, block length $n_\mathsf{t}$, \#of info. bits $k$ \\
        \textbf{output:} Clean set $\cA$
        \algrule
        \begin{algorithmic}
        \State Initiate the weights of $E^W_\phi, F_\theta, G_\theta, H_\theta$
        \State $N=2^{n_\mathsf{t}}$ \Comment{Block length in training}
        \For{$k = 1$ to $\mathsf{N_{iters}}$} 
            \State Sample $x_1^N, y_1^N\sim\cD_{M,N}$
            \State $u_1^N = x_1^NG_N$
            \State Compute $e_{0,1}^N$ by $ e_{0,i} = \f{E_\phi}{y_i}$
            \State Compute $\mathsf{L}$ by applying $\f{\mathsf{NSCLoss}}{e_{0,1}^N, u_1^N, 0}$
            \State Minimize $\mathsf{L}$ w.r.t. $E^W_\phi, F_\theta, G_\theta, H_\theta$ 
        \EndFor
        \State $\cA = \f{\mathsf{SC}_\mathsf{design}}{\cD_{M,N}, k, E^W_\phi, F_\theta, G_\theta, H_\theta}$
        \end{algorithmic}
    \end{algorithm}
  \end{minipage}
  \label{fig:algorithms-mem}
\end{figure}

\par In addition, we make the algorithm more robust by accumulating the loss incurred by bits in intermediate decoding depths of $0,1,\dots, \log{N}-1$. i.e. the loss accumulates $N (\log_2 N + 1)$ terms that correspond to all the bits in $\log_2 N + 1$ decoding depths, and for all $N$ bits per each stage. This is depicted in Figure \ref{fig:nsc-training} and Algorithm \ref{alg:training-nsc}.

\par The complete algorithm is given by the following steps, as given in Algorithm \ref{alg:memory}. First, the parameters of $E^W_\phi, F_\theta, G_\theta, H_\theta$ are initialized and the training block length is determined by $N=2^{n_\mathsf{t}}$. Then, every iteration $x_{1}^N,y_{1}^N$ are drawn from $\cD_{M,N}$ and $u_{1}^N$ is computed by $u_{1}^N=x_{1}^NG_N$.
Next, the channel embedding are computed by $e_{0,i}=\f{E_\phi}{y_i}$ for all $i\in[N]$. At this stage the loss is computed by computing $\f{\mathsf{NSCLoss}}{e_{0,1}^N, u_1^N, 0}$, as given in Algorithm \ref{alg:training-nsc}. 
The loss $\mathsf{L}$ is minimized using \gls{sgd} over the parameters $\phi,\theta$.
This procedure repeats for a predetermined amount of steps, or until the \gls{ce} stops improving. 

\subsection{Channel Embedding Estimation via Neural Estimation Methods}
\par This section claims that neural estimation methods for the estimation of the \gls{di}, as presented in \cite{tsur2022neural, tsur2022optimizing}, may be used for the estimation of the channel embedding function independently from the \gls{nsc}. The motivation for independent estimation of the channel embedding is demonstrated by memoryless channels. In this case, once the channel embedding is chosen, e.g. the \gls{llr}, the corresponding \gls{sc} decoder is compatible for all channels; the only thing that should be computed is the channel \glspl{llr}. Thus, in Section \ref{sec:data_driven_polar_code:memoryless} the channel embedding are estimated via the \gls{mine} Algorithm, and the \gls{sc} decoder is identical for all channels. In the same manner, it is desirable to find algorithms for the estimation of the channel embedding of channels with memory, such that is would be compatible with a single \gls{nsc} decoder. 

Thus, this section identifies that the output of \gls{dine} algorithm may be used to construct channel embedding for channels with memory, in the same manner as the \gls{mine} is used for memoryless channels. Section \ref{sec:memory:dine} provides a brief background on the \gls{dine} algorithm. Section \ref{sec:memory:dine_ss} shows that the \gls{dine} algorithm is a sufficient statistic of the channel outputs $Y^N$ for the estimation of $U^N$. Section \ref{sec:memory:dine_embed} shows how to extract the channel embedding from the \gls{dine} model.

\subsubsection{Estimating the Capacity of Channels with Memory}\label{sec:memory:dine}
\par Let $W:=W_{Y^{MN}\lVert X^{MN}}$ be a binary-input channel with memory and let $\cD_{MN} \sim \left(P_{X^{\otimes MN}} \otimes W_{Y^N\lVert X^N}\right)$ be a finite sample of its inputs-outputs pairs. The \gls{dine} algorithm \cite{tsur2022neural, tsur2022optimizing} estimates the \gls{di} rate from $\bX$ to $\bY$ using the following formula:
\begin{align}\label{eqn:di_estimation}
    \f{\sI_\Psi}{\bX\to \bY}&= \max_{\psi\in\Psi_{XY}} \Bigg\{\frac{1}{MN}\sum_{i=1}^{MN} \f{T_\psi}{x_i,y_i\vert x^{i-1},y^{i-1}} -\log \frac{1}{MN}\sum_{i=1}^{MN} e^{\f{T_\psi}{x_i,z_i\vert x^{i-1},y^{i-1}}}\Bigg\} \nonumber\\
    &-\max_{\psi\in\Psi_{Y}} \Bigg\{\frac{1}{MN}\sum_{i=1}^{MN} \f{T_\psi}{y_i\vert y^{i-1}}-\log \frac{1}{MN}\sum_{i=1}^{MN} e^{\f{T_\psi}{z_i\vert y^{i-1}}}\Bigg\},
\end{align}
where $T_\psi\in\rnn$, the space of \glspl{rnn} whose parameter space is $\Psi$. The \glspl{rv} $Z^N$ are auxiliary \gls{iid} \glspl{rv} distributed on $\cY^N$ and independent of $X^N,Y^N$, that are used for the estimation of the \gls{di} as presented in \cite{tsur2022neural}. The estimated maximizers of the first and second terms are denoted by $T_{\Psi_{XY}}$ and $T_{\Psi_{Y}}$, respectively. 

\subsubsection{DINE gives Sufficient Statistics of the Channel Outputs}\label{sec:memory:dine_ss}
\par The optimal maximizers of the $i$-th argument in first term in Equation \eqref{eqn:di_estimation} is given by $T_i^\ast= \log\frac{P_{Y_i\vert Y^{i-1},X^{i}}}{P_Z} +c$ for $c\in\bR$ and $i\in\bN$. 
For fixed $y^N$, we define a new \gls{rv} $T^\ast_{y^i}: \cX^i\to\bR$ by
\begin{align}\label{eqn:suffiecient_stat_def}
    \f{T^\ast_{y^i}}{x^i} = \log\frac{\f{P_{Y_i\vert Y^{i-1},X^{i}}}{y_i \vert y^{i-1},x^i}}{\f{P_Z}{y_i}}.
\end{align}
The following theorem states that $T^N \triangleq \left\{T^\ast_{Y^i}\right\}_{i=1}^N$ is a sufficient statistic of $Y^N$ for the estimation $U^N$.
 \begin{theorem}\label{thm:channel_sufficient}
 Let $X^N,Y^N \sim P_{X^N} \otimes W_{Y^N\Vert X^N}$ and $P_Z$ such that $P_Y\ll P_Z$. Then $T^N$, as defined in Equation \eqref{eqn:suffiecient_stat_def}, satisfies
 \begin{align}
     &U^N - Y^N - T^N, \\ 
     &U^N - T^N - Y^N.
 \end{align}
 \end{theorem}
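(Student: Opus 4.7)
The plan is to establish the two Markov chains separately, with the first being essentially immediate and the second reducing to a Fisher--Neyman style sufficiency argument.

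For the chain $U^N - Y^N - T^N$: I would first observe that by construction each $T^\ast_{Y^i}$ is a deterministic, measurable function of the prefix $Y^i$ alone, since the value $T^\ast_{y^i}(x^i)$ is determined once $y^i$ is specified (with $x^i$ ranging over the finite alphabet $\cX^i$). Hence $T^N$ is $\sigma(Y^N)$-measurable, and conditional on $Y^N$ it is almost surely constant. This immediately yields the first Markov chain.

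For the chain $U^N - T^N - Y^N$, the key step is the Fisher--Neyman factorization of the causally conditioned likelihood. Using $W_{Y^N\Vert X^N}(y^N\vert x^N) = \prod_{i=1}^N P_{Y_i\vert Y^{i-1},X^i}(y_i\vert y^{i-1},x^i)$ together with the definition of $T^\ast_{y^i}$, I would write
\begin{equation*}
W_{Y^N\Vert X^N}(y^N\vert x^N) \;=\; \Bigl(\prod_{i=1}^N P_Z(y_i)\Bigr)\cdot\Bigl(\prod_{i=1}^N \exp\bigl(T^\ast_{y^i}(x^i)\bigr)\Bigr),
\end{equation*}
which is well-defined almost surely thanks to $P_Y\ll P_Z$. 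The first factor depends only on $y^N$, while the second depends on $y^N$ only through $T^N$. Bayes' rule then gives
\begin{equation*}
P_{X^N\vert Y^N}(x^N\vert y^N) \;=\; \frac{P_{X^N}(x^N)\,\prod_{i=1}^N e^{T^\ast_{y^i}(x^i)}}{\sum_{\tilde x^N}P_{X^N}(\tilde x^N)\,\prod_{i=1}^N e^{T^\ast_{y^i}(\tilde x^i)}},
\end{equation*}
since the factors $\prod_i P_Z(y_i)$ cancel. Therefore $P_{X^N\vert Y^N}$ is $\sigma(T^N)$-measurable, and combined with $\sigma(T^N)\subseteq\sigma(Y^N)$ this forces $P_{X^N\vert Y^N} = P_{X^N\vert T^N}$, i.e.\ $X^N - T^N - Y^N$. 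Since the polar transform $G_N$ is invertible over $\mathbb{F}_2$, the bijection $U^N = X^N G_N^{-1}$ yields $\sigma(U^N)=\sigma(X^N)$ and the same chain transfers to $U^N$.

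The main obstacle is the measure-theoretic care required by the fact that $T^N$ is a function-valued random element rather than a scalar one: each $T^\ast_{Y^i}$ should be viewed as taking values in $\bR^{\vert\cX\vert^i}$, so one must verify that the conditional expectations above are genuinely measurable with respect to the product $\sigma$-algebra generated by these evaluations. The absolute continuity assumption $P_Y\ll P_Z$ is precisely what is needed to make the log-ratios in the definition of $T^\ast_{y^i}$ finite almost surely, and I expect to invoke it just carefully enough to justify the pointwise factorization $\bP$-a.s. Once these technicalities are in place, the rest of the argument is algebraic.
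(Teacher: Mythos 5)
Your proof is correct and follows essentially the same route as the paper: both establish the first chain by noting $T^N$ is $\sigma(Y^N)$-measurable, and both establish the second chain by factoring the likelihood so that the $x^N$-dependence enters only through $T^N$ and then invoking Fisher--Neyman sufficiency (the paper factors the joint $P_{X^N,Y^N}$ directly while you factor the causally conditioned kernel $W_{Y^N\Vert X^N}$ and spell out the Bayes-rule cancellation, which is just the standard derivation of why the factorization criterion implies sufficiency). Your final transfer to $U^N$ via the bijectivity of $G_N$ is also the paper's argument.
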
  

\noindent The proof of Theorem \ref{thm:channel_sufficient} is given in Section \ref{sec:proofs}. The main steps of the proof are to express $P_{X^N,Y^N}$ in terms of $T^N$ and use the well-known Fisher-Neyman Factorization theorem.

\subsubsection{Obtaining the parametric channel embedding from the DINE model}\label{sec:memory:dine_embed}
\par Theorem \ref{thm:channel_sufficient} suggests that \gls{di} estimation is an appropriate objective for the construction of the channel embedding of $Y^N$ needed for the \gls{nsc} decoder. 
However, the evaluation of $T^N$ for all $x^N\in\cX^N$ involves an exponential number of computations. To overcome this, recall that according to Equation \eqref{eqn:di_estimation}, $T_{\Psi_{XY}}$ is approximated by a \gls{rnn} that contains a sequence of layers. Therefore, we design $T_{\psi_{XY}}$ to process $x^N, e^N$ instead of processing $x^N,y^N$. Specifically, we denote this construction by 
$\f{T_{\psi_{XY}, \phi}}{x^i,y^i} = \f{\widetilde{T}_{\Psi_{XY}}}{x^i,e^i}$,
where 
\begin{equation}\label{eqn:nsc_embedding}
e_i=\f{E^W_\phi}{y_i},\; E^W_\phi:\cY\to \bR^d    
\end{equation} 
is an embedding of $y_i$. 

\par With this parameterization, after applying \cite[Algorithm 1]{tsur2022optimizing} we obtain $T_{\Psi_{XY}, \Phi}$, that contains $E^W_\Phi$ as its intermediate layer. Since $T_{\Psi_{XY}, \Phi}$ is composed of sequential layers, any intermediate layer of $T_{\psi_{XY}, \phi}$ must preserve the information that flows to its outputs. Therefore, we choose $E^W_\Phi$ to be the channel embedding required for the \gls{nsc} decoder. For this choice, the parameters of the channel embedding are fixed and the \gls{nsc} can be optimized without the optimization over $\Phi$. Specifically, the minimization in Definition \ref{def:nsc_opt} is perform exclusively over $\Theta$. 

\subsection{Consistency}
The next theorem shows the consistency of the \gls{nsc} for channels with memory. It demonstrates that, for \glspl{fsc}, Algorithm \ref{alg:memory} yields a consistent estimator of the \gls{sc} polar decoder.

\begin{theorem}[Successive Cancellation Decoding of Time Invariant Channels (restated)]\label{thm:nsc}
    Let $\bX,\bY$ be the inputs and outputs of an indecomposable \gls{fsc} as given in Section \ref{sec:pre:fsc}. Let $\cD_{M,N} \sim P_{X^{MN}} \otimes W_{Y^{MN}\| X^{MN}}$, where $N=2^n,\;M,n\in \bN$. Let  $u_{j,i} = (x_{j,1}^{N} G_N)_i$. Then, for every $\varepsilon>0$ there exists $p\in\bN$, compact $\Phi,\Theta \in\bR^p$ and $m\in\bN$ such that for $M>m$ and $i\in[N]$, $\bP-a.s.$
    \begin{align}
        \left| \f{\sH_{\Phi,\Theta}^M}{U_i\vert U^{i-1}, Y^N} -\f{\sH}{U_i\vert U^{i-1}, Y^N}\right| < \varepsilon.
    \end{align}
\end{theorem}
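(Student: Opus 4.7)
The plan is to bound
\[
\bigl|\f{\sH^M_{\Phi,\Theta}}{U_i\vert U^{i-1},Y^N} - \f{\sH}{U_i\vert U^{i-1},Y^N}\bigr|
\]
by the triangle inequality into an approximation error (how well the \gls{nn} family represents the true posterior at the population level) and an estimation error (how well empirical \gls{ce} minimization recovers the population optimum over $\Phi\times\Theta$), and then drive each piece below $\varepsilon/2$: the first by enlarging $\Phi,\Theta$ via universal approximation, and the second by taking $M$ large via a uniform law of large numbers.

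For the approximation error, I would chain three ingredients. First, Theorem \ref{thm:channel_sufficient} establishes that $T^N$ is a sufficient statistic of $Y^N$ for $U^N$, and the construction in Section \ref{sec:memory:dine_embed} shows that a per-symbol embedding $E^W_\phi$ extracted from $T_{\psi_{XY},\phi}$ inherits this sufficiency in the limit; hence there is an ideal embedding $E^*:\cY\to\cE\subset\bR^d$ for which $P_{U_i\vert U^{i-1},Y^N}$ is a deterministic function of $(E^*(Y_1),\dots,E^*(Y_N),U^{i-1})$. Second, I would argue that this deterministic function respects the Arikan recursion: there exist ideal node operations $F^*,G^*$ on $\cE\times\cE$ and a soft-decision $H^*:\cE\to[0,1]$ such that running \gls{sc} with $(F^*,G^*,H^*)$ on the embeddings $\{E^*(Y_j)\}_{j=1}^N$ reproduces the exact posterior in $O(N\log N)$ steps. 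Third, applying the universal-approximation theorem (Theorem \ref{lemma:uni_approx_nn_output}) on compact subsets of the embedding space yields \glspl{nn} $E^W_\phi,F_\theta,G_\theta,H_\theta$ uniformly close to the ideal maps; a Lipschitz-propagation argument over the $\log_2 N$ recursion levels then turns small uniform errors on the elementary operations into total-variation closeness of $P^{\phi,\theta}_{U_i\vert U^{i-1},Y^N}$ to $P_{U_i\vert U^{i-1},Y^N}$, and continuity of \gls{ce} (with soft-decision outputs bounded away from $0$ and $1$) closes this half.

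For the estimation error, I would fix $(\phi,\theta)\in\Phi\times\Theta$ and observe that since the \gls{fsc} is indecomposable, the $N$-block process $\{(X_{j,1}^N,Y_{j,1}^N)\}_{j\in\bN}$ is jointly stationary and ergodic. Birkhoff's ergodic theorem then gives almost-sure pointwise convergence of the empirical sum in \eqref{eqn:ce_nsc_loss} to $\ce{P_{U_i\vert U^{i-1},Y^N}}{P^{\phi,\theta}_{U_i\vert U^{i-1},Y^N}}$. Because the definition of $\f{\sH^M_{\Phi,\Theta}}{\cdot}$ involves an inner minimization over $\Phi\times\Theta$, I would upgrade this pointwise LLN to a uniform one using the Lipschitz dependence of the \gls{sc} recursion on $(\phi,\theta)$ together with compactness of the parameter sets and a standard covering-number argument. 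This yields almost-sure convergence of $\f{\sH^M_{\Phi,\Theta}}{U_i\vert U^{i-1},Y^N}$ to $\min_{\phi,\theta}\ce{P_{U_i\vert U^{i-1},Y^N}}{P^{\phi,\theta}_{U_i\vert U^{i-1},Y^N}}$, which by the approximation step is within $\varepsilon/2$ of $\f{\sH}{U_i\vert U^{i-1},Y^N}$.

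The main obstacle is the second step of the approximation argument: establishing that an \gls{sct}-style posterior update for a general indecomposable \gls{fsc} admits a representation by a fixed-dimension per-symbol embedding together with two-argument node operations. For nontrivial state, the true trellis update depends on the evolving state distribution along the block, which is not obviously representable by a bounded-dimension local recursion. My strategy is to let the embedding dimension $d$ and the \gls{nn} widths scale with $\varepsilon^{-1}$, and to exploit indecomposability so that the state-uncertainty contribution to the posterior decays geometrically; a finite-dimensional surrogate of the trellis then approximates the exact recursion within any prescribed tolerance and can be absorbed into the same $\varepsilon/2$ budget.
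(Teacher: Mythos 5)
Your high-level decomposition (triangle inequality into an approximation term and an estimation term, the first handled by universal approximation on compact sets, the second by a uniform law of large numbers for stationary ergodic processes) matches the paper's exactly, and your estimation half is essentially the paper's Lemma~\ref{lemma:end2end_est}. The gap is in the structure step, which you correctly identify as the crux, but for which you propose an unnecessarily hard route and leave a genuine hole.

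The paper's Lemma~\ref{lemma:nsc_structure} does \emph{not} go through Theorem~\ref{thm:channel_sufficient} or the DINE embedding at all. Instead it invokes the SCT decoder of Wang and Kim: for a \gls{fsc} with finite state set $\cS$, the exact posterior $P_{U_i\mid U^{i-1},Y^N}$ is computed by a recursion whose node values are the vectors $\f{W^{(i)}_N}{y_1^N,u_1^{i-1}}=\bigl\{P_{Y_1^N,U_1^i,S_N\mid S_0}(y_1^N,u_1^i,s_N\mid s_0)\bigr\}_{(s_0,s_N,u_i)}\in\bR^{|\cS|^2}$, with exact check-node and bit-node maps $\widetilde F,\widetilde G$ on $\bR^{|\cS|^2}\times\bR^{|\cS|^2}$ and an exact soft-decision map $\widetilde H$. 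The per-symbol embedding $E(y)=K^{-1}\circ W^{(1)}_1(y)$ is genuinely a function of $y$ alone, and all memory propagation is carried by $\widetilde F,\widetilde G$. So the obstacle you flag---that ``the true trellis update depends on the evolving state distribution along the block, which is not obviously representable by a bounded-dimension local recursion''---simply does not arise for a \gls{fsc}: the state posterior lives in the $(|\cS|-1)$-simplex, a fixed finite dimension, and the recursion is \emph{exact}, not an $\varepsilon$-surrogate. Your proposed remedy (scale $d$ with $\varepsilon^{-1}$ and exploit geometric mixing from indecomposability) is therefore both unnecessary and unproven; the paper needs indecomposability only for the stationarity/ergodicity in the estimation lemma, not to truncate a growing state.

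Your route through Theorem~\ref{thm:channel_sufficient} is also flawed on its own terms. The sufficient statistic there is $T^N=\{T^*_{Y^i}\}_{i\in[N]}$, where each $T^*_{y^i}$ is a function of the whole prefix $y^i$ and is a map $\cX^i\to\bR$; it is not a per-symbol embedding $E^*(y_j)$, and the step in which you promote it to one by appealing to the DINE layer ``inheriting sufficiency in the limit'' is exactly the gap you would need a proof for. It is the SCT trellis structure, not sufficiency of $T^N$, that delivers a per-symbol embedding together with two-argument node operations. With that substitution, your remaining steps (universal approximation of $E,\widetilde F,\widetilde G,\widetilde H$ by \glspl{nn}, propagation through $\log_2 N$ levels, then the uniform LLN) line up with the paper's Lemmas~\ref{lemma:end2end_approx} and~\ref{lemma:end2end_est}.
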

\noindent Theorem \ref{thm:nsc} concludes that there exists \glspl{nn} that approximate the \gls{sc} elementary operations with an arbitrary precision. It also indicates that these operations do not depend on the specific block, or the specific symbol location inside the block. i.e. the same \glspl{nn}, $E_\phi,F_\theta,G_\theta, H_\theta$, may be used for all decoding stages and for all bits inside each decoding stage.
\par The proof starts with identifying that the structure of $\f{P^{\phi, \theta}_{U_i\vert U^{i-1}, Y^N}}{1|u^{i-1},y^N}$ is induced by the structure of the \gls{sc} decoder and that it contains $4$ unique operations, $E_\phi^W, F_\theta,G_\theta, H_\theta$, that operate on channel embedding in $\bR^d$. It continues with an approximation step, in which $\f{P_{U_i\vert U^{i-1}, Y^N}}{1|u^{i-1},y^N}$ is parameterized by \gls{nn} via the universal approximation theorem of \glspl{nn} \cite{hornik1989multilayer}. Then, an estimation step follows, in which expected values are estimated by empirical means via the uniform law of large numbers for stationary and ergodic processes \cite{pevskir1996uniform}. The full proof is given in Section \ref{sec:proofs}.
\subsection{Computational Complexity}
\par The following theorem examines the computational complexity of the \gls{nsc} decoder for the case where $E_\phi, F_\theta, G_\theta, H_\theta$ are \glspl{nn} with $k$ hidden units and the embedding space satisfies $\cE\subset \bR^d$. 
\begin{theorem}[Computational Complexity of the \gls{nsc}]\label{thm:nsc_complexity}
    Let $E_\phi\in\cG_\mathsf{NN}^{(1,k,d)},F_\theta\in\cG_\mathsf{NN}^{(2d,k,d)},G_\theta\in\cG_\mathsf{NN}^{(2d+1,k,d)}, H_\theta\in\cG_\mathsf{NN}^{(d,k,1)}$. Then, the computational complexity of \gls{nsc} decoding is $\f{O}{kd N\log_2 N}$.
\end{theorem}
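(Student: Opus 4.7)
The argument factorizes into two independent counts: (i) the cost of a single forward pass through each of $E_\phi, F_\theta, G_\theta, H_\theta$, and (ii) the total number of NN invocations made during one NSC decoding of a block of length $N$. The claimed bound $O(kdN\log_2 N)$ is then simply their product.

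For (i), I would unpack Definition~\ref{def:NN_function_class}: a shallow NN in $\cG_\mathsf{NN}^{(d_i,k,d_o)}$ with $k$ hidden units implements an affine map into a $k$-dimensional hidden representation followed by a projection back to $\bR^{d_o}$, so one forward pass takes $O(k(d_i+d_o))$ elementary operations. Substituting the I/O dimensions declared in the theorem, namely $(1,d)$ for $E_\phi$, $(2d,d)$ for $F_\theta$, $(2d{+}1,d)$ for $G_\theta$, and $(d,1)$ for $H_\theta$, each of the four NNs runs in $O(kd)$ time per call.

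For (ii), I would unroll the recursion in Algorithm~\ref{alg:training-nsc}, which has the same branching pattern as Arikan's SC decoder. The initial embedding step invokes $E_\phi$ exactly $N$ times, and the leaves invoke $H_\theta$ exactly $N$ times. At recursion depth $\ell \in \{0, 1, \ldots, \log_2 N - 1\}$, there are $2^\ell$ subblocks, each of length $N/2^\ell$, and each subblock triggers $N/2^{\ell+1}$ check-node evaluations before recursing left and $N/2^{\ell+1}$ bit-node evaluations before recursing right. Summing $2^\ell \cdot N/2^{\ell+1} = N/2$ over $\ell = 0,\dots,\log_2 N - 1$ yields $(N/2)\log_2 N$ calls each to $F_\theta$ and $G_\theta$, so the total NN-call count is $2N + N\log_2 N = O(N\log_2 N)$.

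Multiplying the $O(kd)$ per-call cost with the $O(N\log_2 N)$ call count produces the claimed $O(kdN\log_2 N)$. The proof is essentially accounting and I anticipate no real obstacle; the only mild subtlety is confirming that the per-depth subblock counts for $F_\theta$ and $G_\theta$ telescope correctly, which is exactly the standard argument underlying the $O(N\log_2 N)$ complexity of Arikan's transform, so it can be cited rather than re-derived.
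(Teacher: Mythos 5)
Your proof is correct and follows essentially the same approach as the paper's: observe that each of the four NNs runs in $O(kd)$ time per call, observe that the NSC recursion makes $O(N\log_2 N)$ NN invocations just as Arikan's SC decoder makes $O(N\log_2 N)$ elementary-operation invocations, and multiply. The only difference is presentational---the paper simply cites \cite[Section~VIII]{arikan2009channel} for the $O(N\log N)$ call count and substitutes $O(kd)$ for the per-operation cost, whereas you re-derive the call count by unrolling the recursion level by level; both are fine.
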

\begin{proof}
    According to \cite[Section VIII]{arikan2009channel}, the recursive formulas of the \gls{sc} decoder have complexity of $O(N\log N)$. In \cite[Section VIII]{arikan2009channel} the decoding operations have complexity of $O(1)$ and therefore do not affect the overall complexity. Here, we consider decoding operations that are given by \glspl{nn} with input dimension at most $2d+1$, $k$ hidden units and output dimension of at most $d$. The complexity of such \gls{nn} is $O(kd)$ that yields in the overall complexity of the \gls{nsc} to be $O(kdN\log N)$
\end{proof}
\par The only difference between Theorem \ref{thm:nsc_complexity} and \cite[Theorem 5]{arikan2009channel} is that the \gls{nn} computation complexity is given explicitly by $kd$ (even though it could have been neglected as it does not depend on $N$ or the channel's state space). The goal of Theorem \ref{thm:nsc_complexity} is to compare the \gls{nsc} decoder with \gls{sct} decoder. Recall that the computational complexity of the \gls{sct} decoder is $\f{O}{|\cS|^3 N\log N}$. This sets a main advantage of the \gls{nsc} decoder -- its computational complexity does not grow with the memory size of the channel. 

\section{Extension to Asymmetric Input Distributions via Honda-Yamamoto Scheme}\label{sec:hy}
\par This section describes how to extend the data-driven polar decoder to the case where the input distribution is not necessarily symmetric. It starts with a brief description of the Honda-Yamamoto scheme \cite{honda2013polar}. Then, it extends our methods to accommodate asymmetric input distributions by incorporating this scheme. In the case where the channel is memoryless, the extension is straight-forward and it is described in Section \ref{sec:hy:memoryless}. For the case where the channel has memory, the extension entails applying the \gls{nsc} twice, as done in the Honda-Yamamoto scheme. This is described in Section \ref{sec:hy:memory}.
\subsection{Honda-Yamamoto Scheme for Asymmetric Channels}
\par The Honda-Yamamoto scheme \cite{honda2013polar} generalizes polar coding for asymmetric input distributions. Here, the polar decoder is applied twice; first, before observing the channel outputs and second, after observing the channel outputs. An equivalent interpretation is that the first application of \gls{sc} decoding is done on a different channel whose outputs are independent of its inputs. Indeed, in this case, as given in Equation \eqref{eqn:memoryless-channel-stats}, the first term of the \gls{rhs} cancels out, and it follows that the channel embedding are constant for all $y\in\cY$. Thus, for the first application of \gls{sc} decoding, we denote the constant \emph{input embedding} by $E^X$ (rather than $E^W$).
The second application of \gls{sc} decoding follows the same procedure as in the case for symmetric channels.

\par Accordingly, a polar decoder with non symmetric input distribution is defined by the tuple $\left(\cX, \cY, W, E^X, E^W, F, G, H\right)$. Here, we add the input embedding $E^X$ to the definition, where $E_X(y)$ is constant for all $y\in\cY$. An important observation is that the functions $F,G,H$ are independent of the channel, i.e. both application of \gls{sc} decoding (before and after observing the channel outputs) share the same functions $F,G,H$. 

\par Given a finite sample $\cD_{M,N}$, we denote by $$\cA = \mathsf{SC}_\mathsf{design-HY}\left(\cD_{M,N}, k, E^X, E^W, F, G, H\right)$$ the procedure of finding the set of good channels $\cA \subset [N]$ with $|\cA|=k$ over the sample $\cD_{M,N}$ with a \gls{sc} decoder that uses $E^X$, $E^W$, $F$, $G$, $H$ as its elementary operations. 
Specifically, for each block $j\in[M]$, \gls{sc} decoding is applied twice. First, $E^X$ is used to compute the channel embedding; it yields in the computation of $\left\{\f{P_{U_i|U^{i-1}}}{1| u_{j,1}^{i-1}}\right\}_{j\in[M],i\in[N]}$. 
Next, $E^W$ is used to compute the channel embedding; it yields in the computation of $\left\{\f{P_{U_i|U^{i-1},Y^N}}{1| u_{j,1}^{i-1},y_{j,1}^N}\right\}_{j\in[M],i\in[N]}$.
For each $i\in[N]$, we compute the empirical average 
$$\f{\what{\sI}}{W^{(i)}_N} = -\frac{1}{M}\sum_{j=1}^M \log \f{P_{U_i|U^{i-1}}}{u_{j,i}| u_{j,1}^{i-1}}+\frac{1}{M}\sum_{j=1}^M \log \f{P_{U_i|U^{i-1},Y^N}}{u_{j,i}| u_{j,1}^{i-1},y_{j,1}^N}$$
to estimate the \gls{mi} of the effective bit channels. This estimate is used to complete the polar code design by choosing $\cA\subset[N]$ with the highest values of $\left\{\f{\what{\sI}}{W^{(i)}_N}\right\}_{i\in[N]}$.

\subsection{Memoryless Channels}\label{sec:hy:memoryless}
The case of asymmetric input distributions is similar to the case of polar code design for uniform input distributions. We deal with non-uniform input distributions by applying the Honda-Yamamoto scheme \cite{honda2013polar}. 
Consider $W:=W_{Y|X}$ be a binary-input memoryless channel and $P_X$ an \gls{iid} and non-uniform input distribution\footnote{In the case of memoryless channels it is sufficient to consider an \gls{iid} input distribution since it achieves capacity \cite{cover2006elements}.}. Accordingly, given $\cD_{MN}$, also here, once $T_\Phi$ is estimated via the \gls{mine} algorithm, it is used as a proxy of the channel embedding $E^W$ by the formula 
\begin{equation}
    \f{E_\Phi^W}{y} = \f{T_\Phi}{1,y} - \f{T_\Phi}{0,y} + \log\frac{\f{P_X}{1}}{\f{P_X}{0}}
\end{equation} 
and the Honda-Yamamoto scheme is applied ``as-is''. Specifically, for this case, Algorithm \ref{alg:memoryless} is applied with the only exception that the polar code design $\mathsf{SC}_\mathsf{design}$ is replaced by $\mathsf{SC}_\mathsf{design-HY}$.
\begin{algorithm}[t]
    \caption{Data-driven polar code design for channels with memory and non-\gls{iid} input distribution}
    \label{alg:memory-hy}
    \textbf{input:} 
    Input distribution $P_{X^N}$, Channel $W_{Y^N\lVert X^N}$, block length $n_\mathsf{t}$, \#of info. bits $k$ \\
    \textbf{output:} Clean set $\cA$
    \algrule
    \begin{algorithmic}
    \State Initiate the weights of $E^X_\phi,E^Y_\phi$ and $G_\theta, F_\theta,H_\theta$
    \State $N=2^{n_\mathsf{t}}$
    \For{$k = 1$ to $\mathsf{N_{iters}}$} 
        \State Sample $x^N, y^N \sim P_{X^N}\otimes W_{Y^N\Vert X^N}$
        \State $u^N = x^NG_N$
        \State Duplicate $e_X$ to $e_X^N$ 
        \State Compute $e_Y^N$ by $ e_i = \f{E}{y_i}$
        \State Compute $\sL_X$ by applying $\f{\mathsf{NSCTrain}}{e_X^N, u^N, 0}$
        \State Compute $\sL_Y$ by applying $\f{\mathsf{NSCTrain}}{e_Y^N, u^N, 0}$
        \State Minimize $\sL_X + \sL_Y$ w.r.t. $\phi,\theta$. 
    \EndFor
    \State $\cA = \f{\mathsf{SC}_\mathsf{design}}{\cD_{M,N}, k, E^X_\phi,E^W_\phi, G_\theta, F_\theta, H_\theta}$
    \end{algorithmic}
\end{algorithm}
\subsection{Channels with Memory} \label{sec:hy:memory}
\par This section considers two issues. The first is the choice of an input distribution. This is addressed by employing algorithms for capacity estimation \cite{tsur2022neural, tsur2022optimizing}. The second issue addresses the construction of a \gls{nsc} decoder that is tailored for stationary input distributions.

\par For the choice of the input distribution, we employ recent method for the optimization of the \gls{dine} as presented in \cite{tsur2022optimizing}. Therein, the authors provide an \gls{rl} algorithm that uses \gls{dine} to estimate capacity achieving input distributions. 
The input distribution is approximated with an \gls{rnn} with parameter space denoted by $\Pi$. Let $P_X^\pi$ be the estimated capacity achieving input distribution. Thus, by application of \cite[Algorithm 1]{tsur2022optimizing}, we obtain a model of $P_X^\pi$ from which we are able to sample observations of the channel inputs. 



\par Extension of Algorithm \ref{alg:memory} to $P_{X^N}$ (that is not uniform and \gls{iid}) involves introducing additional parameters, that we denote by $\phi_2\in\Phi$. Accordingly, we denote the set of the channel embedding by $\phi=\{\phi_1, \phi_2\}$, where $\phi_1$ denotes the parameters of $E^X$ and $\phi_2$ are the parameters of $E^W$. We define $E^X_\phi:\cY\to\bR^d$ as a constant \gls{rv} that satisfies $\f{E^X_\phi}{y}=e_X\in\bR^d$ for all $y\in\cY$. Accordingly, the \gls{nsc} in this case is defined by $E^X_\phi,E^W_\phi,F_\theta,G_\theta,F_\theta$. Thus, Algorithm \ref{alg:memory} needs to be updated in order to optimize $E^X_\phi$ as well. This is addressed by first applying the \gls{nsc} with inputs $e_X^N$ to compute $P^{\phi, \theta}_{U_i\vert U^{i-1}}$, where $e_X^N\in\bR^{d\times N}$ is a matrix whose columns are duplicates of $e_X$. Second, the \gls{nsc} is applied with $e^N_Y$ to compute $P^{\phi,\theta}_{U_i\vert U^{i-1}, Y^N}$, where $e_Y^N\in\bR^{d\times N}$ is a matrix whose $i$-th column is $\f{E^W_\phi}{y_i}$.
\begin{figure*}[b]
     \centering
     \hspace{-1.05cm}
     \begin{subfigure}[]{0.48\textwidth}
        \centering
\begin{tikzpicture}

\definecolor{chocolate2267451}{RGB}{226,74,51}
\definecolor{dimgray85}{RGB}{85,85,85}
\definecolor{gainsboro229}{RGB}{229,229,229}

\begin{axis}[
width=2.45in,
height=2.3in,
log basis y={10},
x grid style={gainsboro229},
xlabel=\textcolor{dimgray85}{\(\displaystyle n\)},
xmajorgrids,
xmin=2.6, xmax=10.5,
xtick style={color=dimgray85},
y grid style={gainsboro229},
ylabel=\textcolor{dimgray85}{Bit error rate},
ymajorgrids,
ymin=3e-06, ymax=0.168600008718341,
ymode=log,
ytick style={color=dimgray85},
legend style={at={(0.05,0.05)},anchor=south west, nodes={scale=0.85, transform shape} }
]
\addplot [semithick, violet, mark=*, mark size=1, mark options={solid}]
table {%
3 0.0405924
4 0.0224246
5 0.0162571884984026
6 0.00384976114649681
7 0.00248592251630226
8 0.000826343283582089
9 9.30725682938735e-05
10 1.11070110701107e-05
};
\addlegendentry{$E^W$}

\addplot [semithick, chocolate2267451, mark=*, mark size=1, mark options={solid}]
table {%
3 0.0410296
4 0.0187896
5 0.0170403354632588
6 0.00428941082802548
7 0.00289547372458765
8 0.000913059701492537
9 9.60740740740741e-05
10 1.28518518518519e-05
};
\addlegendentry{$E^W_\Phi$}
\end{axis}

\end{tikzpicture}         
        \label{fig:ber_bsc}
     \end{subfigure}
     \hspace{0.025cm}
     \begin{subfigure}[]{0.48\textwidth}
     \vspace{.25cm}
        \centering
\begin{tikzpicture}

\definecolor{chocolate2267451}{RGB}{226,74,51}
\definecolor{dimgray85}{RGB}{85,85,85}
\definecolor{gainsboro229}{RGB}{229,229,229}

\begin{axis}[
width=2.45in,
height=2.3in,
log basis y={10},
x grid style={gainsboro229},
xlabel=\textcolor{dimgray85}{\(\displaystyle n\)},
xmajorgrids,
xmin=2.6, xmax=10.5,
xtick style={color=dimgray85},
y grid style={gainsboro229},
ylabel=\textcolor{dimgray85}{Bit error rate},
ymajorgrids,
ymin=3e-06, ymax=0.168600008718341,
ymode=log,
ytick style={color=dimgray85},
legend style={at={(0.05,0.05)},anchor=south west, nodes={scale=0.85, transform shape} }
]
\addplot [semithick, violet, mark=*, mark size=1, mark options={solid}]
table {%
3	0.041
4	0.02
5	0.011625
6	0.00784375
7	0.0039375
8	0.001269737
9	0.000271896
10	3.99727E-05
};
\addlegendentry{$E^W$}

\addplot [semithick, chocolate2267451, mark=*, mark size=1, mark options={solid}]
table {%
3 0.042
4 0.024
5 0.01175199900049975
6 0.00812231384307846
7 0.004150862068965517
8 0.0013808783108445777
9 0.0002945002498750625
10 0.00004231930909545227
};
\addlegendentry{$E^W_\phi$}
\end{axis}

\end{tikzpicture}         
        \label{fig:ber_awgn}
     \end{subfigure}
     \caption{The left figure compares the \gls{ber} incurred by  Algorithm \ref{alg:memoryless} and the \gls{sc} decoder on a BSC with parameter ${0.1}$. The right figure compares the \gls{ber} incurred by Algorithm \ref{alg:memoryless} and the \gls{sc} decoder on a AWGN channel. The curves labeled by $E^W, E^W_\Phi$ correspond to the analytic and estimated channel embedding, respectively.}
     \label{fig:memoryless-symmetric}
\end{figure*}
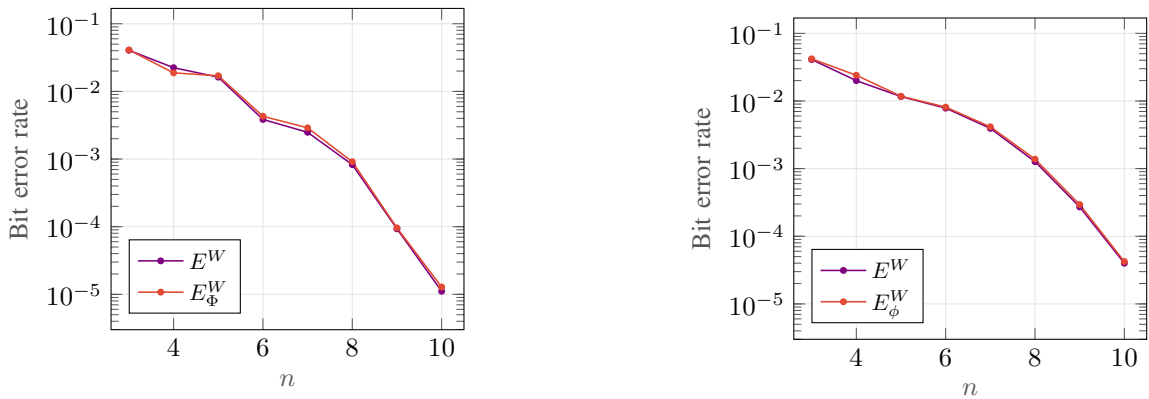
\par The training procedure admits the following steps. First, the channel inputs and outputs are sampled by $x^N,y^N \sim P^\pi_{X^N}\otimes W_{Y^N\Vert X^N}$. Then, the values of $u^N = x^N G_N$ are computed, and form the labels of the algorithm. Next, the channel statistics $e_Y^N$ are computed and the input statistics are duplicated to obtain $e_X^N$. The next step is to apply the \gls{nsc}-Train procedure twice, i.e.
\begin{align}
    \sL_X &= \mathsf{NSCTrain}(e_X^N, u^N, 0) \\
    \sL_Y &= \mathsf{NSCTrain}(e_Y^N, u^N, 0),
\end{align}
which are minimized via \gls{sgd}. This procedure is depicted in Algorithm \ref{alg:memory-hy}.

\begin{figure*}[t]
     \centering
     \hspace{-1.05cm}
     \begin{subfigure}[]{0.48\textwidth}
        \centering
\begin{tikzpicture}

\definecolor{chocolate2267451}{RGB}{226,74,51}
\definecolor{dimgray85}{RGB}{85,85,85}
\definecolor{gainsboro229}{RGB}{229,229,229}

\begin{axis}[
width=2.45in,
height=2.3in,
log basis y={10},
x grid style={gainsboro229},
xlabel=\textcolor{dimgray85}{\(\displaystyle n\)},
xmajorgrids,
xmin=2.6, xmax=10.5,
xtick style={color=dimgray85},
y grid style={gainsboro229},
ylabel=\textcolor{dimgray85}{Bit error rate},
ymajorgrids,
ymin=3e-06, ymax=0.168600008718341,
ymode=log,
ytick style={color=dimgray85},
legend style={at={(0.05,0.05)},anchor=south west, nodes={scale=0.85, transform shape} }
]
\addplot [semithick, violet, mark=*, mark size=1, mark options={solid}]
table {%
3	0.05
4	0.04
5	0.03175
6	0.01875
7	0.01346875
8	0.0088125
9	0.00411875
10	0.001868219
};
\addlegendentry{$P_X(1)=\frac{9}{16}, E^W_\Phi$}

\addplot [semithick, chocolate2267451, mark=*, mark size=1, mark options={solid}]
table {%
3  0.05
4  0.0325
5  0.016
6  0.011375
7  0.00628125
8  0.001690341
9  0.000422341
10 0.000119641
};
\addlegendentry{$P_X(1)=\frac{1}{2}, E^W_\Phi$}
\end{axis}

\end{tikzpicture}         
        \label{fig:ber_bec-hy}
     \end{subfigure}
     \hspace{0.025cm}
     \begin{subfigure}[]{0.48\textwidth}
     \vspace{.25cm}
        \centering
\begin{tikzpicture}

\definecolor{chocolate2267451}{RGB}{226,74,51}
\definecolor{dimgray85}{RGB}{85,85,85}
\definecolor{gainsboro229}{RGB}{229,229,229}

\begin{axis}[
width=2.45in,
height=2.3in,
log basis y={10},
x grid style={gainsboro229},
xlabel=\textcolor{dimgray85}{\(\displaystyle n\)},
xmajorgrids,
xmin=2.6, xmax=10.5,
xtick style={color=dimgray85},
y grid style={gainsboro229},
ylabel=\textcolor{dimgray85}{Bit error rate},
ymajorgrids,
ymin=3e-06, ymax=0.168600008718341,
ymode=log,
ytick style={color=dimgray85},
legend style={at={(0.05,0.05)},anchor=south west, nodes={scale=0.85, transform shape} }
]

\addplot [semithick, violet, mark=*, mark size=1, mark options={solid}]
table {%
3	0.05
4	0.04
5	0.03175
6	0.022875
7	0.01246875
8	0.0088125
9	0.00411875
10	0.001868219
};
\addlegendentry{$P_X(1)=\frac{9}{16}, E^W$}

\addplot [semithick, chocolate2267451, mark=*, mark size=1, mark options={solid}]
table {%
3  0.045
4  0.032
5  0.019
6  0.008875
7  0.0058125
8  0.001683036
9  0.000445313
10 7.27371E-05
};
\addlegendentry{$P_X(1)=\frac{1}{2}, E^W$}
\end{axis}

\end{tikzpicture}         
        \label{fig:ber_bec-hy-analytic}
     \end{subfigure}
     \caption{The figure compares the \glspl{ber} incurred on the asymmetric BEC; both compare the results for $P_X(1)=0.5$ and $P_X(1)=\frac{9}{16}$ (capacity achieving input distribution). The left figure compares the \glspl{ber} incurred by applying the memoryless algorithm (Algorithm \ref{alg:memoryless}) for both choices of $P_X$. The right figure is the ground truth of the left figure by using $E_W$ instead of $E^\Phi_W$.}
     \label{fig:memoryless-non-symmetric}
\end{figure*}
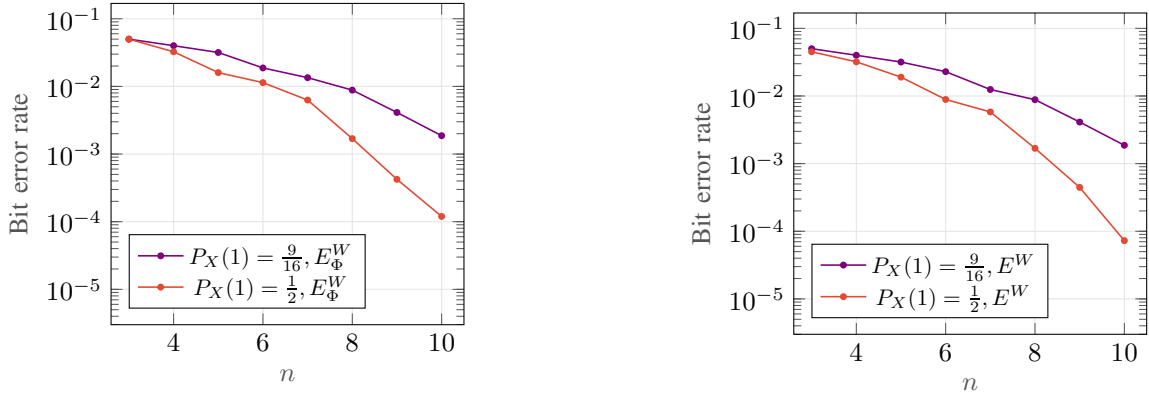

\section{Experiments}\label{sec:exp}
\par This section presents experiments on memoryless channels and channels with memory. The experiments demonstrate the performance of the proposed algorithms for symmetric and non-symmetric input distributions. Section \ref{sec:exp:memoryless} considers memoryless channels, and Section \ref{sec:exp:memory} considers channels with memory. Each section includes the results for both cases where the input distribution is uniform or not. All the polar codes in this sections, unless specified otherwise, are designed with rate $R=0.25$.

\begin{figure*}[b]
\def\x{0.05cm}
     \centering
     \hspace{1.0cm}
     \begin{subfigure}[]{0.2\textwidth}
        \centering
        \hspace{-1.4cm}
\begin{tikzpicture}

\definecolor{chocolate2267451}{RGB}{226,74,51}
\definecolor{dimgray85}{RGB}{85,85,85}
\definecolor{gainsboro229}{RGB}{229,229,229}

\begin{axis}[
width=1.96in,
height=1.84in,
log basis y={10},
x grid style={gainsboro229},
xlabel=\textcolor{dimgray85}{\(\displaystyle n\)},
xmajorgrids,
xmin=2.6, xmax=10.5,
xtick style={color=dimgray85},
y grid style={gainsboro229},
ylabel=\textcolor{dimgray85}{Bit error rate},
ymajorgrids,
ymin=3e-06, ymax=0.168600008718341,
ymode=log,
ytick style={color=dimgray85},
legend style={at={(0.05,0.05)},anchor=south west, nodes={scale=0.85, transform shape} }
]
\addplot [semithick, violet, mark=*, mark size=1, mark options={solid}]
table {%
3	0.041
4	0.02
5	0.011625
6	0.00784375
7	0.0039375
8	0.001269737
9	0.000271896
10	3.99727E-05
};
\addlegendentry{\scriptsize SC}

\addplot [semithick, chocolate2267451, mark=*, mark size=1, mark options={solid}]
table {%
3  0.042
4  0.022
5  0.011751999
6  0.008122314
7  0.004150862
8  0.001380878
9  0.0003245
10 4.23193E-05
};
\addlegendentry{\scriptsize NSC}
\end{axis}

\end{tikzpicture}         
        \label{fig:symmetric-awgn}
        \vspace{-0.45cm}
        \caption{AWGN Channel}
     \end{subfigure}
     \begin{subfigure}[]{0.2\textwidth}
        \centering
\begin{tikzpicture}

\definecolor{chocolate2267451}{RGB}{226,74,51}
\definecolor{dimgray85}{RGB}{85,85,85}
\definecolor{gainsboro229}{RGB}{229,229,229}

\begin{axis}[
width=1.96in,
height=1.84in,
log basis y={10},
x grid style={gainsboro229},
xlabel=\textcolor{dimgray85}{\(\displaystyle n\)},
xmajorgrids,
xmin=2.6, xmax=10.5,
xtick style={color=dimgray85},
y grid style={gainsboro229},
ymajorgrids,
ymin=3e-06, ymax=0.168600008718341,
ymode=log,
ymajorticks=false,
legend style={at={(0.05,0.05)},anchor=south west, nodes={scale=0.85, transform shape} }
]
\addplot [semithick, violet, mark=*, mark size=1, mark options={solid}]
table {%
3	0.034
4	0.017
5	0.01225
6	0.00778125
7	0.0018375
8	0.00050038
9	0.00011935
10	2.98125E-05
};
\addlegendentry{\scriptsize SCT}

\addplot [semithick, chocolate2267451, mark=*, mark size=1, mark options={solid}]
table {%
3  0.037
4  0.0175
5  0.01375
6  0.0086875
7  0.001975
8  0.000501838
9  0.000129828
10 2.90117E-05
};
\addlegendentry{\scriptsize NSC}
\end{axis}

\end{tikzpicture}         
        \label{fig:symmetric-ising}
        \vspace{-0.5cm}
        \hspace{2cm}
        \caption{Ising Channel}
     \end{subfigure}
     \hspace{\x}
     \begin{subfigure}[]{0.2\textwidth}
        \centering
\begin{tikzpicture}

\definecolor{chocolate2267451}{RGB}{226,74,51}
\definecolor{dimgray85}{RGB}{85,85,85}
\definecolor{gainsboro229}{RGB}{229,229,229}

\begin{axis}[
width=1.96in,
height=1.84in,
log basis y={10},
x grid style={gainsboro229},
xlabel=\textcolor{dimgray85}{\(\displaystyle n\)},
xmajorgrids,
xmin=2.6, xmax=10.5,
xtick style={color=dimgray85},
y grid style={gainsboro229},
ymajorgrids,
ymin=3e-06, ymax=0.168600008718341,
ymode=log,
ymajorticks=false,
legend style={at={(0.05,0.05)},anchor=south west, nodes={scale=0.85, transform shape} }
]
\addplot [semithick, violet, mark=*, mark size=1, mark options={solid}]
table{
3  0.023
4  0.015
5  0.0055
6  0.00437
7  0.00109
8  0.00046
9  0.0001053
10 0.0000294
};
\addlegendentry{\scriptsize SCT}
\addplot [semithick, chocolate2267451, mark=*, mark size=1, mark options={solid}]
table{
3  0.025
4  0.016
5  0.0057
6  0.0045
7  0.0011
8  0.00048
9  0.000113
10 0.0000315
};
\addlegendentry{\scriptsize NSC}
\end{axis}

\end{tikzpicture}         
        \label{fig:symmetric-interference}
        \vspace{-0.5cm}
        \caption{ISI Channel}
     \end{subfigure}
    \hspace{\x}
     \begin{subfigure}[]{0.2\textwidth}
        \centering
\begin{tikzpicture}

\definecolor{chocolate2267451}{RGB}{226,74,51}
\definecolor{dimgray85}{RGB}{85,85,85}
\definecolor{gainsboro229}{RGB}{229,229,229}

\begin{axis}[
width=1.96in,
height=1.84in,
log basis y={10},
x grid style={gainsboro229},
xlabel=\textcolor{dimgray85}{\(\displaystyle n\)},
xmajorgrids,
xmin=2.6, xmax=10.5,
xtick style={color=dimgray85},
ymajorgrids,
ymin=3e-06, ymax=0.168600008718341,
ymode=log,
ymajorticks=false,
legend style={at={(0.05,0.05)},anchor=south west, nodes={scale=0.85, transform shape} }
]

\addplot [semithick, chocolate2267451, mark=*, mark size=1, mark options={solid}]
table {%
3  0.024
4  0.017
5  0.008104166666666668
6  0.007875
7  0.00328125
8  0.001255055147058823
9  0.000468674150485436
10 0.000059498046875
};
\addlegendentry{\scriptsize NSC}
\end{axis}

\end{tikzpicture}         
        \label{fig:symmetric-maagn}
        \vspace{-0.5cm}
        \caption{MA-AGN Channel}
     \end{subfigure}
     \caption{The figures compares the \gls{ber} incurred by  Algorithm \ref{alg:memory} and the \gls{sct} decoder on a Ising, Trapdoor, ISI channel with $m=2$ and the MA-AGN channel. All figures compare with the ground truth given by the \gls{sct} decoder, except of the MA-AGN channel for which there is no known optimal polar decoder.}
     \label{fig:memory-symmetric}
\end{figure*}
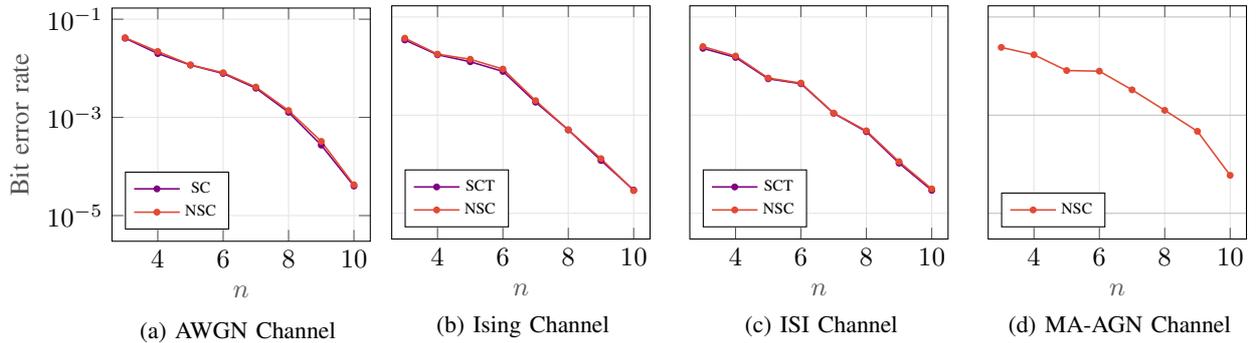
\begin{figure*}[t]
\def\x{0.05cm}
     \centering
     \hspace{-1.05cm}
     \begin{subfigure}[]{0.42\textwidth}
        \centering
\begin{tikzpicture}

\definecolor{chocolate2267451}{RGB}{226,74,51}
\definecolor{dimgray85}{RGB}{85,85,85}
\definecolor{gainsboro229}{RGB}{229,229,229}

\begin{axis}[
width=2.45in,
height=2.3in,
log basis y={10},
x grid style={gainsboro229},
xlabel=\textcolor{dimgray85}{\(\displaystyle n\)},
xmajorgrids,
xmin=2.6, xmax=10.5,
xtick style={color=dimgray85},
y grid style={gainsboro229},
ylabel=\textcolor{dimgray85}{Bit error rate},
ymajorgrids,
ymin=3e-06, ymax=0.168600008718341,
ymode=log,
ytick style={color=dimgray85},
legend style={at={(0.05,0.05)},anchor=south west, nodes={scale=0.85, transform shape} }
]
\addplot [semithick, violet, mark=*, mark size=1, mark options={solid}]
table {%
3	0.05
4	0.04
5	0.03175
6	0.021875
7	0.01546875
8	0.0088125
9	0.00411875
10	0.001868219
};
\addlegendentry{SCT-HY}

\addplot [semithick, chocolate2267451, mark=*, mark size=1, mark options={solid}]
table {%
3 0.051
4 0.0405
5 0.03225
6 0.0225
7 0.01603125
8  0.00956875
9 0.004590029761904762
10 0.0019566482843137254
};
\addlegendentry{NSC-HY}
\end{axis}

\end{tikzpicture}         
         \label{fig:memory-hy}
     \end{subfigure}
     \hspace{1cm}
     \begin{subfigure}[]{0.42\textwidth}
         \centering
\begin{tikzpicture}

\definecolor{chocolate2267451}{RGB}{226,74,51}
\definecolor{dimgray85}{RGB}{85,85,85}
\definecolor{gainsboro229}{RGB}{229,229,229}

\begin{axis}[
width=2.45in,
height=2.3in,
log basis y={2},
log basis x={2},
x grid style={gainsboro229},
xlabel=\textcolor{dimgray85}{Iterations},
xmajorgrids,
xtick style={color=dimgray85},
y grid style={gainsboro229},
ylabel=\textcolor{dimgray85}{Cross Entropy},
ymajorgrids,
ymode=log,
xmode=log,
legend style={at={(0.05,0.05)},anchor=south west, nodes={scale=0.85, transform shape} }
]
\addplot [semithick, violet, mark=*, mark size=1, mark options={solid}]
table{
99	0.529642678	
199	0.47383987	
299	0.465324502	
399	0.463704214	
499	0.461685277	
599	0.454882154	
699	0.444178408	
799	0.437864328	
899	0.43887722	
1899	0.437472106	
2899	0.43516274	
3899	0.432297876	
4899	0.432160182	
5899	0.424145555	
6899	0.416724347	
7899	0.421828784	
8899	0.423037063	
9899	0.425610548	
10899	0.419650427	
11899	0.412209415	
12899	0.426969064	
13899	0.422101489	
14899	0.414695843	
15899	0.424000552	
16899	0.421759603	
17899	0.419090245	
18899	0.423659054	
19899	0.417936223	
20899	0.42376031	
21899	0.420919723	
22899	0.41384255	
23899	0.422769188	
24899	0.424433128	
25899	0.418833584	
26899	0.41106292	
27899	0.41355074	
28899	0.42391105	
29899	0.420711293	
30899	0.419472936	
31899	0.417230755	
32899	0.41972149	
33899	0.416981887	
34899	0.42008562	
35899	0.419591454	
36899	0.424744047	
37899	0.423787573	
38899	0.414216568	
39899	0.412367931	
40899	0.415015089	
41899	0.412053564	
42899	0.416041233	
43899	0.415576776	
44899	0.420292887	
45899	0.42145776	
46899	0.41065532	
47899	0.417109317	
48899	0.424656019	
49899	0.415577724	
50899	0.414405893	
51899	0.423091145	
52899	0.423522241	
53899	0.408849813	
54899	0.414575397	
55899	0.415204832	
56899	0.419688526	
57899	0.420756973	
58899	0.420574084	
59899	0.421210875	
60899	0.419142651	
61899	0.422425002	
62899	0.425849412	
63899	0.416351134	
64899	0.420100296	
65899	0.417999268	
66899	0.417477355	
67899	0.420776005	
68899	0.422132142	
69899	0.414536559	
70899	0.410929616	
71899	0.418895075	
72899	0.416668815	
73899	0.413515516	
74899	0.42412465	
75899	0.416761288	
76899	0.410465542	
77899	0.41235192	
78899	0.41634759	
79899	0.410098793	
80899	0.416967249	
81899	0.417403301	
82899	0.42757837	
83899	0.412885291	
84899	0.4165782	
85899	0.42706384	
86899	0.41190721	
87899	0.410385896	
88899	0.414754392	
89899	0.415776444	
90899	0.416536448	
91899	0.41106091	
92899	0.417389859	
93899	0.414412001	
94899	0.416719292	
95899	0.407735722	
96899	0.413705775	
97899	0.416179366	
98899	0.422798983	
99899	0.419725663

};
\addlegendentry{NSC};
\end{axis}

\end{tikzpicture}         
         \label{fig:memory-convergence}
     \end{subfigure}
     \caption{The left figure illustrates the performance of Algorithm \ref{alg:memory-hy} on the Ising channel in comparison with the ground truth given by the Honda-Yamamoto scheme}
     \label{fig:memory-compare}
\end{figure*}
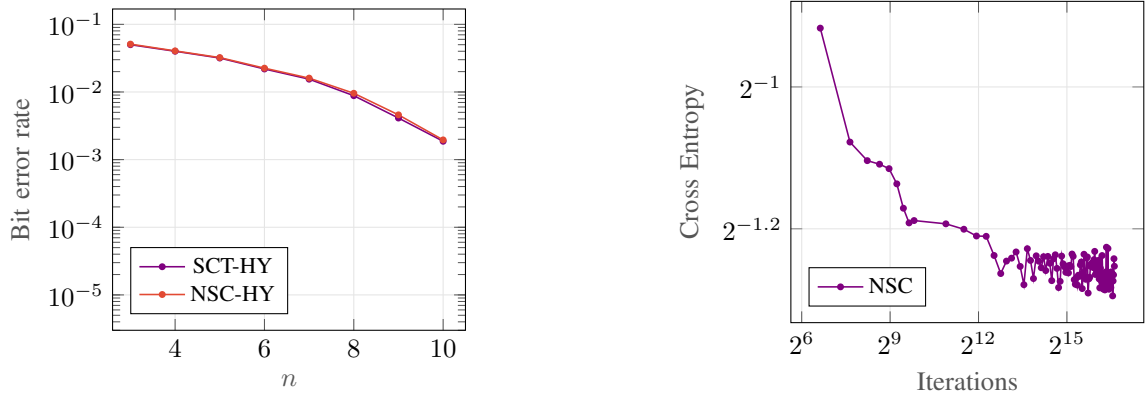

\subsection{Memoryless channels}\label{sec:exp:memoryless}
\par The following experiments test the proposed methodology to design polar codes for various memoryless channels. To demonstrate our algorithm, we conduct our experiments on both symmetric and non-symmetric \gls{bdmc}. The \gls{bsc} and the \gls{awgn} channels are chosen as instances of symmetric channels. A non symmetric \gls{bec}, as defined in \cite{honda2012polar}, is chosen as an instance of an asymmetric \glspl{bdmc}. To validate our numerical results, we compare our algorithms with the \gls{sc} decoder that provides the optimal decoding rule under the framework of polar codes.

\par The \gls{bsc} channel is defined by $W(y|x) = p\bI[y\neq x] + (1-p)\bI[y=x]$; here, we choose $p=0.1$. The \gls{awgn} channel is defined by the following relation $Y = X+N$, where $X$ is the channel input, $Y$ is the channel output and $N\sim \cN(0,\sigma^2)$ is an \gls{iid} Gaussian noise. In our experiments $\sigma^2=0.5$. The non-symmetric \gls{bec} is defined by two erasures probabilities, $\epsilon_0, \epsilon_1$, namely the probabilities for an erasure of the $``0"$ symbol and the $``1"$ symbol, respectively. Accordingly, $W(x|x)=1-\epsilon_x,\; W(?|x)=\epsilon_x$ for $x\in\{0,1\}$. Similar to \cite{honda2012polar}, we choose $\epsilon_0=0.4, \epsilon_1=0.8159$.
\begin{figure*}[b]
\def\x{0.05cm}
     \centering
     \hspace{-1.05cm}
     \begin{subfigure}[]{0.4\textwidth}
        \centering
\begin{tikzpicture}

\definecolor{chocolate2267451}{RGB}{226,74,51}
\definecolor{dimgray85}{RGB}{85,85,85}
\definecolor{gainsboro229}{RGB}{229,229,229}
\definecolor{darkblue}{RGB}{0,0,128}
\definecolor{darkgreen}{RGB}{0,128,0}
\definecolor{darkred}{RGB}{128,0,0}
\definecolor{darkpurple}{RGB}{128,0,128}
\definecolor{darkorange}{RGB}{255,140,0}

\begin{axis}[
width=2.45in,
height=2.3in,
log basis y={10},
x grid style={gainsboro229},
xlabel=\textcolor{dimgray85}{\(\displaystyle n\)},
xmajorgrids,
xmin=2.6, xmax=10.5,
xtick style={color=dimgray85},
y grid style={gainsboro229},
ylabel=\textcolor{dimgray85}{Bit error rate},
ymajorgrids,
ymin=3e-06, ymax=0.168600008718341,
ymode=log,
ytick style={color=dimgray85},
legend style={at={(0.05,0.05)},anchor=south west, nodes={scale=0.85, transform shape} }
]
\addplot [semithick, violet, mark=*, mark size=1, mark options={solid}]
table {%
3	0.037
4	0.015
5	0.017
6	0.005291667
7	0.003669643
8	0.000567568
9	0.000152483
10	1.69414E-05
};
\addlegendentry{$n_\mathsf{t}=10$}

\addplot [semithick, chocolate2267451, mark=*, mark size=1, mark options={solid}]
table {%
3	0.05
4	0.023
5	0.01475
6	0.0076875
7	0.004291667
8	0.000770579
9	0.000254112
10	5.49102E-05
};
\addlegendentry{$n_\mathsf{t}=9$}
\addplot [semithick, darkblue, mark=*, mark size=1, mark options={solid}]
table {%
3	0.048
4	0.018
5	0.01225
6	0.008
7	0.005054688
8	0.001525391
9	0.000497446
10	0.000196922
};
\addlegendentry{$n_\mathsf{t}=8$}
\addplot [semithick, darkorange, mark=*, mark size=1, mark options={solid}]
table {%
3	3.900e-02
4	2.475e-02
5	2.475e-02
6	1.281e-02
7	5.638e-03
8	3.081e-03
9	1.056e-03
10	4.430e-04
};
\addlegendentry{$n_\mathsf{t}=7$}

\addplot [semithick, darkred, mark=*, mark size=1, mark options={solid}]
table {%
3	0.036
4	0.0125
5	0.019
6	0.01275
7	0.008145833
8	0.003551136
9	0.001977214
10	0.001360085
};
\addlegendentry{$n_\mathsf{t}=6$}

\addplot [semithick, darkgreen, mark=*, mark size=1, mark options={solid}]
table {%
3	0.041
4	0.019
5	0.01825
6	0.0120625
7	0.007854167
8	0.004517857
9	0.002153556
10	0.001516778
};
\addlegendentry{$n_\mathsf{t}=5$}

\end{axis}

\end{tikzpicture}         
         \label{fig:nt-compare}
         \caption{Comparison of the incurred \glspl{ber} for varying values of $n_\mathsf{t}$ when applying Algorithm \ref{alg:memory} on the Ising channel.}
     \end{subfigure}
     \hspace{1cm}
     \begin{subfigure}[]{0.45\textwidth}
        \centering
        \begin{tabular}{|c|c|c|}
        \hline
        $m$ & SCT & NSC \\
        \hline
        1 & $O(2^3 N\log N)$   & $O( 800 N\log N)$ \\
        2 & $O(2^6 N\log N)$   & $O( 800 N\log N)$ \\
        3 & $O(2^9 N\log N)$   & $O( 800 N\log N)$ \\
        4 & $O(2^{12}N\log N)$ & $O( 800 N\log N)$ \\
        5 & $O(2^{15}N\log N)$ & $O( 800 N\log N)$ \\
        6 & $O(2^{18}N\log N)$ & $O( 800 N\log N)$ \\
        \hline
        \end{tabular}
        \caption{Comparison of the computational complexity of the \gls{sct} decoder and the \gls{nsc} decoder on the ISI channel for varying values of $m$. The complexity of the \gls{nsc} corresponds for $d=8$ and the \glspl{nn} have $50$ hidden units.}
        \label{tab:isi-complexity}
     \end{subfigure}
     \label{fig:isi-complexity}
\end{figure*}
\par Figure \ref{fig:memoryless-symmetric} shows the application of Algorithm \ref{alg:memoryless} on the \gls{bsc} and the \gls{awgn} channels. It reports the obtained \glspl{ber} by Algorithm \ref{alg:memoryless} in comparison with the \gls{sc} decoder. 
Figure \ref{fig:memoryless-non-symmetric} illustrates two comparisons. The first compares the \glspl{ber} obtained via the extension of Algorithm \ref{alg:memoryless} to the Honda-Yamamoto scheme, as desribed in Section \ref{sec:hy:memoryless}, and by the optimal decoding rule of the Honda-Yamamoto scheme. It also compares the \glspl{ber} obtained by a symmetric input distribution with the capacity achieving input distribution. In opposite to what we expected, better \glspl{ber} were obtained via a symmetric input distribution. The reason for this stems from the polarization of the source $U^N$, which has negative effect in short blocks.

\subsection{Channels with Memory}\label{sec:exp:memory}
\par The experiments here demonstrate the performance of the \gls{nsc} decoder on various channels. First, having a sample $\cD_{M,N}$ does not indicate if it is drawn from channel with or without memory. As memoryless channels are a special case of channels with memory, we start by testing the \gls{nsc} on memoryless channels. The experiments proceed to \glspl{fsc} for which there exists an analytic polar decoder that is given by the \gls{sct} decoder \cite{wang2015construction}. Recall that the computational complexity of the \gls{sct} decoder is $O(|\cS|^3 N \log N)$; therefore, we evaluate our algorithms on channels with a small state space and on channels with a large state space, i.e. $|\cS|^3\gg N\log N$. 
The last experiments test the \gls{nsc} decoder on channels with infinite state space for which an optimal decision rule is intractable.

\par As instances of channels with memory, we choose the Ising channel \cite{ising1925beitrag} and the \gls{isi} channel \cite{cover2006elements}, respectively. These channels
belong to the family of \glspl{fsc}, and therefore, their optimal decoding rule is given by the \gls{sct} decoder. We also tested our methodology on channels with continuous state space for which the \gls{sct} decoder can not be applied. As an instance of such channels, we choose the \gls{maagn} channel. 
The Ising channel \cite{ising1925beitrag} is defined by $Y=X$ or $Y=S$ with equal probability, and $S^\prime = X$,
where $X$ is the channel input, $Y$ is the channel output, $S$ is the channel states in the beginning of the transmission and $S^\prime$ is the channel's state in the end of the transmission. 
The interference channel is defined by the formula $Y_t = \sum_{i=0}^m h_i X_{t-i} + Z_i$,
where $X_t, Y_t$ are the channel input and output at time $t$, $\{h_i\}_{i=1}^m$ are the interference parameters and $Z_i\stackrel{iid}{\sim}\cN(0,\sigma^2)$. In our experiment we set $h_i=0.9^i$ and $\sigma^2=0.5$.
The \gls{maagn} channel is given by $Y_t=X_t+\tilde{Z}_t$, $\tilde{Z}_t=Z_t + \alpha Z_{t-1}$,
where $\alpha\in\bR$ and $Z_t\stackrel{iid}{\sim}\cN(0,\sigma^2)$.

\par Figure \ref{fig:memory-symmetric} compares the \glspl{ber} attained by Algorithm \ref{alg:memory} vs. the optimal decoding rule for the \gls{awgn}, Ising, ISI with $m=2$ and the \gls{maagn} channel with $\alpha=0.9$. For the last channel, we illustrate the incurred \glspl{ber} without any comparison, since, as far as we know, there is no available decoding rule for channel with continuous state space. 
Figure \ref{fig:memory-compare} illustrates the the \glspl{ber} attained by Algorithm \ref{alg:memory-hy} on the Ising channel with a stationary (non-symmetric) input distribution, the convergence of Algorithm \ref{alg:memory} when applied on the Ising channel. It also illustrates the \glspl{ber} incurred for a varying values of $n_\mathsf{t}$, the block length in training. It is clear that increasing $n_\mathsf{t}$ produce better estimation results. Table \ref{tab:isi-complexity} depicts the decoding complexity of the \gls{sct} and the \gls{nsc} decoders for the \gls{isi} channel. 

\section{Proofs}\label{sec:proofs}
This section provides the proofs for the main theorems of the paper.
\subsection{Proof of Theorem \ref{thm:mine_cons}}
The proof relies on the consistency of the \gls{mine} algorithm \cite[Theorem 2]{belghazi2018mine}. Specifically, for all $\varepsilon>0$ there exists $m>0$ such that for all $M>m$ $\bP-a.s.$ $$\left| \f{\sI_\Phi}{\cD_{M}}-\f{\sI}{X;Y}\right| <\varepsilon,$$
where $\f{\sI_\Phi}{\cD_{M}}=\max_{\phi\in\Phi} \frac{1}{M}\sum_{i=1}^{M} \f{T_\phi}{x_i,y_i}-\log \frac{1}{M}\sum_{i=1}^{M} e^{\f{T_\phi}{x_i,\tilde{y}_i}}$, as defined in Equation \eqref{eqn:mine}. Recall that $T_\Phi$ denotes the maximizer of the \gls{mine} algorithm and $T^\ast=\log\frac{P_{X,Y}}{P_X\otimes P_Y}$. 

\par The next argument claims that for all $\varepsilon^\prime>0$ there exists $\varepsilon>0$ such that $\bP-a.s.$ $\left|T^\ast-T_\Phi\right|<\varepsilon^\prime$. 
For this we define the Gibbs density $G_\Phi \triangleq \frac{P_X\otimes P_Ye^{T_\Phi}}{\Ep{P_X\otimes P_Y}{e^{T_\Phi}}}$. First note that $\Ep{P_{X,Y}}{G_\Phi}=1$ and $G_\Phi \ge 0$ and therefore is a valid density. For $P\triangleq P_{X,Y}$ and $Q\triangleq P_X\otimes P_Y$, we observe
\begin{align}
\kl{P}{G_\Phi} &=\Ep{P}{\log\frac{P}{\frac{Qe^{T_\Phi}}{\Ep{Q}{e^{T_\Phi}}}}} \\ 
&=\Ep{P}{\log\frac{P}{Q} - \log \frac{e^{T_\Phi}}{\Ep{Q}{e^{T_\Phi}}}} \\
&=\f{\sI}{X;Y} - \left(\Ep{P}{T_\Phi}-\log\Ep{Q}{e^{T_\Phi}}\right)  \ge  0,
\end{align}
with equality if and only if $P=G_\Phi$ almost everywhere. From the consistency of \gls{mine} we have that $\kl{P}{G_\Phi}<\varepsilon$. Since the mappings $T_\Phi \mapsto G_\Phi$ and $G_\Phi\mapsto\kl{P}{G_\Phi}$ are continuous, we conclude that $\kl{P}{G_\Phi}<\varepsilon$ implies $\bP-a.s.$ $\left|T^\ast-T_\Phi\right|<\varepsilon^\prime$.

\par The last step is to observe that the mapping of the \gls{sc} decoder is continuous. Specifically, the mapping $E^W_\Phi = \f{T_\Phi}{1, \cdot} - \f{T_\Phi}{0, \cdot}$ is continuous and $F,G,H$, as defined in Equation \eqref{eqn:sc_ops}, are continuous, and therefore for all $\varepsilon^{\prime\prime}>0$ there exists $\varepsilon^\prime>0$ such that $\left| \f{\sH_{\Phi}^M}{U_i\vert U^{i-1}, Y^N} -\f{\sH}{U_i\vert U^{i-1}, Y^N}\right| <\varepsilon^{\prime\prime}$, which concludes the proof.
\subsection{Proof of Theorem \ref{thm:nsc}}
    \par The proof starts with identifying the structure of the \gls{nsc}. Specifically, it concludes there exists recursive formulas for \gls{sc} decoding of the polar code, for which the channel embedding belong to a $d$-dimensional Euclidean space, for some $d\in \bN$. First, we define the channel embedding, as defined by the authors of \cite{wang2015construction}.
    \begin{definition}[Channel Embedding]
        Let $W$ be a \gls{fsc}. Then, $\f{W^{(i)}_N}{y_1^N,u_1^{i-1}}$ is defined by 
        \begin{align}
          \f{W^{(i)}_N}{y_1^N,u_1^{i-1}} = \left\{{\left(s_0,s_N,u_{i}\right)\in\cS\times\cS\times\cX}\;:\;\;\f{P_{Y_1^N,U_1^i,S_N|S_0}}{y_1^N,u_1^i,s_N \vert s_0}\right\}.
        \end{align}
    \end{definition}
    \noindent The channel embedding is a set of all the joint probabilities of $y_1^N,u_1^i$ given every initial state and final state of the channel. The channel embedding $\f{W^{(i)}_N}{y_1^N,u_1^{i-1}}$ is a finite set if $|\cX|,|\cS|<\infty$. It is a deterministic function of $Y_1^N,U_1^{i-1}$ and it represented by a vector whose dimension is $|\cS|^2(|\cX|-1)$; its entries correspond to all possible values of $U_i,S_0,S_N$. Herein, the dimension of the vector is denoted by $|\cS|^2$ due to the assumption that $|\cX|=2$. 

    The next lemma shows the structure of the polar decoder, for which the channel embeddings are represented in $\bR^d$ for any $d\in\bN$, instead of $\bR^{|\cS|^2}$ as in the case of the \gls{sct} decoder. 
    \begin{lemma}[Structure]\label{lemma:nsc_structure}
        Fix $n\in\bN$ and set $N=2^n, i\in[N]$. Let $\cE \subset \bR^d$ for $d\in\bN$. Then, for all $i\in[N]$ the functions $P_{U_i|U_1^{i-1},Y_1^N}$ may be computed using four distinct functions $E: \cY\to\cE, F:\cE\times\cE\to\cE, G:\cE\times\cE\times\cX\to\cE,H:\cE\to[0,1]$. The functions are given by 
        \begin{align}
            \f{E^{(1)}_1}{y_i} &= \f{E}{y_i} \\
            \f{E^{(2i-1)}_{2N}}{y_1^{2N}, u_1^{2i-2}} &= \f{F}{\f{E^{(i)}_{N}}{y_1^{N},u_{1,e}^{2i-2}\oplus u_{1,o}^{2i-2}}, \f{E^{(i)}_{N}}{y_{N+1}^{2N},u_{1,e}^{2i-2}}}  \\
            \f{E^{(2i)}_{2N}}{y_1^{2N}, u_1^{2i-1}} &= \f{G}{\f{E^{(i)}_{N}}{y_1^{N},u_{1,e}^{2i-2}\oplus u_{1,o}^{2i-2}}, \f{E^{(i)}_{N}}{y_{N+1}^{2N},u_{1,e}^{2i-2}}, u_{2i-1}}  \\
            \f{P_{U_i|U^{i-1},Y^N}}{1|u_1^{i-1},y_1^{N}} &= \f{H}{\f{E^{(i)}_{N}}{y_1^{N},u_{1}^{i-1}}}.
        \end{align}
    \end{lemma}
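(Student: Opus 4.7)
The plan is to prove Lemma \ref{lemma:nsc_structure} by induction on $n$ with $N=2^n$, exhibiting an explicit sufficient-statistic embedding that obeys the four-function recursion. The base case fixes the single-use embedding $E$ and the inductive step follows from the standard polar split applied to a finite-state channel, as carried out in \cite{wang2015construction}. The four operations $E,F,G,H$ are identified with: the per-symbol channel embedding, the odd/"check-node" combining rule, the even/"bit-node" combining rule, and the marginal-plus-normalize soft decision.

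For the base case $n=0$, define $E:\cY\to\cE$ to be the channel embedding in the sense of the definition preceding the lemma, namely the vector whose entries are $\f{P_{Y_1,S_1\vert X_1,S_0}}{y_1,s_1\vert x_1,s_0}$ indexed by $(s_0,s_1,x_1)\in\cS\times\cS\times\cX$. Since $\lvert\cX\rvert=2$, this vector lives in $\bR^{\lvert\cS\rvert^2}$, which we place inside $\cE\subset\bR^d$ by choosing $d\ge\lvert\cS\rvert^2$. Thus $\f{E^{(1)}_1}{y_1}=\f{E}{y_1}$ by construction.

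For the inductive step, suppose that for a block of length $N$ the embeddings $\left\{\f{E^{(i)}_N}{y_1^N,u_1^{i-1}}\right\}_{i\in[N]}$ encode the joint distributions $\f{P_{Y_1^N,U_1^i,S_N\vert S_0}}{\cdot}$. The Arikan transform applied to two concatenated blocks gives $x_{2j-1}=u_{2j-1}\oplus u_{2j}$ and $x_{2j}=u_{2j}$, which together with the Markov structure of the FSC (conditional independence of the two half-block outputs given the midpoint state) yields, after marginalizing over $s_{N/2}$ and over $u_{2i}$ or not, the usual recursions $\f{W^{(2i-1)}_{2N}}{\cdot}\propto\sum\f{W^{(i)}_N}{\cdot}\f{W^{(i)}_N}{\cdot}$ and $\f{W^{(2i)}_{2N}}{\cdot}\propto\f{W^{(i)}_N}{\cdot}\f{W^{(i)}_N}{\cdot}$. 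These are bilinear maps in the state-indexed embeddings and, crucially, their functional form depends neither on $i$ nor on the block position. We therefore take $F$ to be the check-node combining rule (a trellis-multiplication followed by marginalization over the shared mid-state and over $u_{2i}$) and $G$ to be the bit-node rule (the same but with $u_{2i-1}$ clamped to its observed value). The inductive hypothesis supplies the two inputs $\f{E^{(i)}_N}{y_1^N,u_{1,e}^{2i-2}\oplus u_{1,o}^{2i-2}}$ and $\f{E^{(i)}_N}{y_{N+1}^{2N},u_{1,e}^{2i-2}}$ required by the recursion, completing the step. Finally, $\f{P_{U_i\vert U^{i-1},Y^N}}{1\vert u_1^{i-1},y_1^N}$ is recovered from $\f{E^{(i)}_N}{y_1^N,u_1^{i-1}}$ by summing the entries with $u_i=1$ over $(s_0,s_N)$ under a prescribed prior on $s_0$ and dividing by the analogous sum over $u_i\in\{0,1\}$; this fixed, position-independent map is $H$.

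The main obstacle is the bookkeeping of the state-space marginalization inside $F$ and $G$. Whereas in Arikan's memoryless setting these operations reduce to the scalar formulas in Equation \eqref{eqn:sc_ops}, in the FSC setting they are convolutions along the state dimension inherited from the trellis, and one must verify that the combined embedding again has the claimed form $\f{P_{Y_1^{2N},U_1^{2i},S_{2N}\vert S_0}}{\cdot}$ so that the induction closes. Once this closure is established, the existence of fixed maps $E,F,G,H$ on a common Euclidean space $\cE$ follows, and any dimension $d$ satisfying $d\ge\lvert\cS\rvert^2$ suffices; larger $d$ simply allows equivalent over-parameterized representations, which is what is exploited by the neural approximation step in Theorem \ref{thm:nsc}.
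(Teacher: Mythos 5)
Your proof follows essentially the same route as the paper's: take the FSC channel embedding from \cite{wang2015construction} (the vector of probabilities $P_{Y_1^N,U_1^i,S_N\mid S_0}$ indexed by $(s_0,s_N,u_i)$), show via the trellis recursions of \cite[Theorem~2]{wang2015construction} that the check-node and bit-node combining rules are position-independent functions of the two half-block embeddings (marginalize over the shared mid-state and, for the check node, additionally over $u_{2i}$), and identify the soft decision with a fixed normalize-and-compare map. The one substantive difference is how the embedding space $\cE\subset\bR^d$ is obtained. You inject $\bR^{|\cS|^2}$ into $\bR^d$ by a coordinate inclusion and therefore need $d\ge|\cS|^2$, whereas the paper conjugates the trellis maps $\widetilde F,\widetilde G,\widetilde H$ by a (set-theoretic) bijection $K:\bR^{|\cS|^2}\to\bR^d$ and its inverse, invoking $|\bR^d|=|\bR|$ to obtain such a $K$ for every $d\in\bN$; this is what lets the lemma claim ``for $d\in\bN$'' without restriction. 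Strictly speaking, your argument therefore does not establish the lemma for $d<|\cS|^2$. That said, your restriction is arguably the more defensible one in the broader context: for $d<|\cS|^2$ the bijection $K$ cannot be continuous (invariance of domain), and Lemma~\ref{lemma:end2end_approx} needs $F,G,H$ continuous on compact sets to invoke universal approximation, so only $d\ge|\cS|^2$ is in fact usable downstream. If you want to match the lemma as written, insert the bijection $K$ and set $E=K\circ W^{(1)}_1$, $F=K\circ\widetilde F\circ(K^{-1},K^{-1})$, and analogously for $G,H$; otherwise state the restriction $d\ge|\cS|^2$ explicitly.
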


 \par Let $E_\phi^W$ and $F_\theta, G_\theta, H_\theta$, as defined in Definition \ref{def:nsc}, be \glspl{nn}. For simplicity we denote $\Theta^\prime = \Theta \times \Phi$. For every choice of $\theta\in\Theta^\prime$, there is an induced distribution denoted by $P^\theta_{U_i\vert U^{i-1}, Y^N}$; we denote by $\f{P_\theta}{U^{i-1},Y^N} \triangleq \f{P^\theta_{U_i\vert U^{i-1}, Y^N}}{1\vert U^{i-1},Y^N}$.
 The following lemma states that there exists $\theta\in\Theta^\prime$ that can approximate $\f{\sH}{U_i\vert U^{i-1}, Y^N}$ with an arbitrary precision. 
    \begin{lemma}[Approximation]\label{lemma:end2end_approx}
        Let $\varepsilon>0$. There exists a \gls{nn} $P_\theta: \cX^{i-1}\times \cY^N \to [0,1]$, with parameters $\theta$ in some compact space $\Theta^\prime\subset \bR^p$, $p\in\bN$, such that 
        \begin{equation}\label{eqn:end2end_approx}
            \Big\vert \f{\sH}{U_i\vert U^{i-1}, Y^N} - \f{\sH_{\Theta^\prime}}{U_i\vert U^{i-1}, Y^N} \Big\vert < \varepsilon,
        \end{equation}
        where 
        \begin{equation}
            \f{\sH_{\Theta^\prime}}{U_i\vert U^{i-1}, Y^N} = \min_{\theta\in\Theta^\prime}\E{-U_i\log\f{ P_\theta}{U^{i},Y^N}-(1-U_i)\log\left(1-\f{P_\theta}{U^{i},Y^N}\right)}.
        \end{equation}
    \end{lemma}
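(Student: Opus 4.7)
My plan combines Lemma \ref{lemma:nsc_structure} with the universal approximation theorem (Theorem \ref{lemma:uni_approx_nn_output}), then propagates the approximation errors through the SC recursion, and finally lifts the resulting pointwise posterior approximation to an approximation of the cross-entropy.

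By Lemma \ref{lemma:nsc_structure}, the true posterior $\f{P_{U_i\vert U^{i-1}, Y^N}}{1 \vert u^{i-1}, y^N}$ is obtained as $H \circ \Psi_i$, where $\Psi_i$ is a fixed composition of the check-node map $F$ and the bit-node map $G$ across $n = \log_2 N$ depth levels, fed by the channel embeddings $E(y_1),\ldots,E(y_N)$ and the past bits $u^{i-1}$. For an indecomposable FSC with finite $\cS$, the true embeddings $\f{W^{(i)}_N}{y^N, u^{i-1}}$ take values in the compact probability simplex on $\cS\times\cS\times\cX$, so I may fix a compact $\cE\subset\bR^d$ containing the range of $E$ and all intermediate outputs of $F, G$; moreover $E, F, G, H$ are continuous on their compact domains (they are rational in the underlying joint probabilities, with denominators bounded away from zero). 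I then apply Theorem \ref{lemma:uni_approx_nn_output} separately to each of $E, F, G, H$: for any $\eta>0$ there exist NNs $E_\phi, F_\theta, G_\theta, H_\theta \in \cG_\mathsf{NN}$ with parameters in a compact $\Theta^\prime \subset \bR^p$ that are $\eta$-close to the true maps in sup-norm on the respective compact domains.

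I next propagate these $\eta$-errors through the SC recursion. Uniform continuity of $F, G, H$ on the compact embedding set supplies moduli of continuity $\omega_F, \omega_G, \omega_H$, and an induction on recursion depth yields a bound of the form $|P_{U_i\vert U^{i-1}, Y^N} - P_\theta| \leq \varphi_N(\eta)$ with $\varphi_N(\eta) \to 0$ as $\eta \to 0$ (for the fixed block length $N$). To convert this to a cross-entropy bound I would ensure $H_\theta$ takes values in $[\delta, 1-\delta]$ for some small $\delta>0$ (e.g., by composing with a clipped sigmoid), which makes the integrand in the definition of $\f{\sH_{\Theta^\prime}}{U_i\vert U^{i-1}, Y^N}$ Lipschitz in $P_\theta$ with constant $1/\delta$. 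Then $|P - P_\theta| < \varepsilon^\prime$ gives
\begin{equation*}
\Big\vert \f{\sH}{U_i\vert U^{i-1}, Y^N} - \f{\sH_{\Theta^\prime}}{U_i\vert U^{i-1}, Y^N} \Big\vert < \varepsilon^\prime/\delta,
\end{equation*}
where the minimization in $\sH_{\Theta^\prime}$ is only upper-bounded by the explicit $\theta$ I constructed. Choosing $\varepsilon^\prime = \delta \varepsilon$ finishes the proof.

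The main obstacle is controlling the accumulated error through $n = \log_2 N$ levels of recursion, since each $F$ or $G$ is applied to already-perturbed inputs rather than to the true embeddings. This is what uniform continuity on the compact embedding set buys us, together with the fact that $N$ is fixed before the approximation is chosen. A secondary technical concern is the singularity of $\log$ at the endpoints of $[0,1]$, which I handle with the clipped sigmoid so that the integrand remains bounded and Lipschitz.
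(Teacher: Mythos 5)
Your route is genuinely different from the paper's. The paper's proof does \emph{not} pass through Lemma~\ref{lemma:nsc_structure} at all: it defines $L^\ast = -\log P_{U_i\vert U^{i-1},Y^N}$ and $L_\theta$ as the empirical log-loss, applies the universal approximation theorem \emph{directly} to $L^\ast$ viewed as a single function of $(u^{i},y^N)$ (so $P_\theta$ is treated as a generic shallow NN on $\cX^{i-1}\times\cY^N$, not the composed NSC), and handles the singularity of $\log$ by a truncation argument: since $\E{|L^\ast|}\le 1$, dominated convergence yields $R$ with $\E{L^\ast\bI_{L^\ast>R}}<\varepsilon/2$, and the triangle inequality closes the bound. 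Your plan is instead compositional: approximate each of $E,F,G,H$ from Lemma~\ref{lemma:nsc_structure} separately, propagate $\eta$-errors through $\log_2 N$ levels via uniform continuity on compact sets, and handle the $\log$ singularity by clipping $H_\theta$ to $[\delta,1-\delta]$. Your approach has the real advantage that it actually produces an approximator with the weight-shared NSC architecture (the same $F_\theta,G_\theta$ at every node and depth), which is what Theorem~\ref{thm:nsc} needs; the paper's proof, read strictly, only shows \emph{some} NN $P_\theta$ works and leaves the chaining with Lemma~\ref{lemma:nsc_structure} implicit. Two caveats on your side. First, your appeal to continuity of $E,F,G,H$ is delicate: the $K:\bR^{|\cS|^2}\to\bR^d$ appearing in the proof of Lemma~\ref{lemma:nsc_structure} is constructed by a cardinality argument and is necessarily discontinuous when $d<|\cS|^2$ (invariance of domain), so for your universal-approximation step you must take $d\ge |\cS|^2$ and $K$ a continuous embedding with continuous left inverse on its image, and $H$ needs an ad hoc definition where the intermediate embeddings vanish; you implicitly assume both. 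Second, your Lipschitz-in-$P_\theta$ reduction only controls $|L^\ast-L_{\hat\theta}|$ on the event that the \emph{true} posterior also lies in $[\delta,1-\delta]$; on the complement you still need the paper's truncation/dominated-convergence step (or a similar argument) to control the contribution of $\E{L^\ast\bI_{P\notin[\delta,1-\delta]}}$. With those two points filled in, your argument is sound and in fact carries more of the load of Theorem~\ref{thm:nsc} than the paper's own proof of this lemma.
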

    \par The third lemma guarantees the almost sure convergence of the empirical average to the expected value when the number of samples goes to infinity.
    \begin{lemma}[Estimation]\label{lemma:end2end_est}
        Let $\varepsilon>0$. Let $P_\theta: \cX^{i-1}\times \cY^N \to [0,1]$, with parameters $\theta$ in some compact space $\Theta^\prime\subset \bR^p$, $p\in\bN$. Then, there exists $m\in\bN$ such that for all $M>m$, $\bP \text{ a.s.}$,
        \begin{equation}\label{eqn:end2end_est}
            \Big\vert \f{\sH_{\Theta^\prime}}{U_i\vert U^{i-1}, Y^N} - \f{\sH^M_{\Theta^\prime}}{U_i\vert U^{i-1}, Y^N} \Big\vert < \varepsilon,
        \end{equation}
                where 
        \begin{equation}
            \f{\sH^M_{\Theta^\prime}}{U_i\vert U^{i-1}, Y^N} = \min_{\theta\in\Theta^\prime}\frac{1}{M}\sum_{j=1}^M \left\{-U_i\log\f{ P_\theta}{U^{i},Y^N}-(1-U_i)\log\left(1-\f{P_\theta}{U^{i},Y^N}\right)\right\}.
        \end{equation}
    \end{lemma}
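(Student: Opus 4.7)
The plan is to establish Lemma \ref{lemma:end2end_est} via a uniform strong law of large numbers applied to the parametric loss family, followed by the standard reduction from uniform convergence of the objectives to convergence of their minima. Define the per-sample loss
\begin{equation*}
    \f{\ell_\theta}{u^{i},y^N} = -u_i\log\f{P_\theta}{u^{i-1}, y^N} - (1-u_i)\log\left(1-\f{P_\theta}{u^{i-1},y^N}\right),
\end{equation*}
so that $\f{\sH_{\Theta^\prime}}{U_i\vert U^{i-1},Y^N} = \min_{\theta\in\Theta^\prime}\E{\f{\ell_\theta}{U^{i},Y^N}}$ and $\f{\sH^M_{\Theta^\prime}}{U_i\vert U^{i-1},Y^N} = \min_{\theta\in\Theta^\prime}\frac{1}{M}\sum_{j=1}^M \f{\ell_\theta}{U_{j,1}^{i},Y_{j,1}^N}$.

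First, I would verify that the block-level process $\{(U_{j,1}^{i}, Y_{j,1}^N)\}_{j\in\bN}$ is stationary and ergodic. Since the input process is \gls{iid} and the underlying \gls{fsc} is indecomposable, the joint input-output process $(\bX,\bY)$ is stationary and ergodic; non-overlapping $N$-blocking preserves stationarity and ergodicity, and the deterministic map $X_{j,1}^N\mapsto U_{j,1}^N = X_{j,1}^N G_N$ yields the claimed block process.

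Next, I would verify the hypotheses of the uniform ergodic theorem of \cite{pevskir1996uniform}. Continuity of $\theta\mapsto\f{\ell_\theta}{u^{i},y^N}$ is inherited from the fact that $\f{P_\theta}{u^{i-1},y^N}$ is a fixed finite composition, dictated by the \gls{sc} recursion, of the continuous \gls{nn} maps $E_\phi,F_\theta,G_\theta,H_\theta$. To obtain an integrable envelope I would restrict the soft-decision head so that its output lies in $[\delta,1-\delta]$ for some $\delta>0$, e.g.\ by composing $H_\theta$ with a clipping operation; the ground-truth posterior $\f{P_{U_i\vert U^{i-1},Y^N}}{1|u^{i-1},y^N}$ lies in a compact subset of $(0,1)$ under the stated channel assumptions, so Lemma \ref{lemma:end2end_approx} remains valid after this restriction. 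The clipping makes $|\ell_\theta|\le -\log\delta$ uniformly, supplying a trivially integrable envelope. Combined with compactness of $\Theta^\prime$, \cite{pevskir1996uniform} yields
\begin{equation*}
    \lim_{M\to\infty}\sup_{\theta\in\Theta^\prime}\left\vert \frac{1}{M}\sum_{j=1}^M \f{\ell_\theta}{U_{j,1}^{i},Y_{j,1}^N} - \E{\f{\ell_\theta}{U^{i},Y^N}}\right\vert = 0,\quad \bP\text{-a.s.}
\end{equation*}

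Finally, the elementary inequality $\left\vert \min_\theta f_M(\theta) - \min_\theta f(\theta)\right\vert \le \sup_\theta \left\vert f_M(\theta) - f(\theta)\right\vert$, applied to the empirical and expected losses above, converts the display into the desired bound $\left\vert \f{\sH^M_{\Theta^\prime}}{U_i\vert U^{i-1},Y^N} - \f{\sH_{\Theta^\prime}}{U_i\vert U^{i-1},Y^N}\right\vert < \varepsilon$, $\bP$-a.s., for all $M\ge m$ for some $m\in\bN$. The main obstacle is producing the integrable envelope: without some mechanism that keeps $P_\theta$ uniformly away from $\{0,1\}$, the log-losses are unbounded and the uniform ergodic theorem cannot be applied verbatim. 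Justifying the clipping step while preserving the approximation guarantee of Lemma \ref{lemma:end2end_approx} is therefore the technical crux of the proof.
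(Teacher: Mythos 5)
Your proposal follows essentially the same route as the paper's proof: bound the difference of the two minima by $\sup_{\theta\in\Theta^\prime}$ of the deviation between the empirical average and the expectation of the loss, then invoke the uniform law of large numbers for stationary and ergodic processes of \cite{pevskir1996uniform} using compactness of $\Theta^\prime$ and continuity of the loss in $\theta$. Your added discussion of the integrable envelope (keeping $H_\theta$ bounded away from $\{0,1\}$ so the log-loss is dominated) addresses a hypothesis that the paper's proof passes over silently, so it is a welcome refinement rather than a departure.
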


    \par The proof is concluded by the combination of Lemmas \ref{lemma:end2end_approx} and \ref{lemma:end2end_est}. Specifically, we use the lemmas to claim that there exist $\Theta^\prime\in\bR^p$ and $m\in\bN$ such the Equations \eqref{eqn:end2end_est} and \eqref{eqn:end2end_approx} hold with $\frac{\varepsilon}{2}$. By the triange inequality, for all $M>m$, $\bP-a.s.$
    \begin{align}
        &\left| \f{\sH_{\Theta^\prime}^M}{U_i\vert U^{i-1}, Y^N} -\f{\sH}{U_i\vert U^{i-1}, Y^N}\right|  \nn\\
        &\leq\left| \f{\sH_{\Theta^\prime}}{U_i\vert U^{i-1}, Y^N} -\f{\sH_{\Theta^\prime}^M}{U_i\vert U^{i-1}, Y^N}\right|+\left| \f{\sH_{\Theta^\prime}}{U_i\vert U^{i-1}, Y^N} -\f{\sH}{U_i\vert U^{i-1}, Y^N}\right|\nn\\&<\varepsilon,
    \end{align}
    which concludes the proof.
    \subsubsection{Proof of Lemma \ref{lemma:nsc_structure}}
    \par We derive the structure of $P_{U_i|U^{i-1},Y^N}$ by utilizing the mechanism of the \gls{sc} decoder. The following derivation is similar to the one in the proof of \cite[Theorem 2]{wang2015construction}. According to \cite[Theorem 2]{wang2015construction}, the channel embedding is given by
        \begin{align}
             \f{E^{(i)}_N}{y_1^N,u_1^{i-1}} &= \left\{\f{P_{Y_1^N,U_1^i,S_N|S_0}}{y_1^N,u_1^i,s_N \vert s_0}\right\}_{\left(s_0,s_N,u_{i}\right)\in\cS\times\cS\times\cX}. \label{eqn:trellis}
        \end{align}
    For every $s_0,s_N,u_{i}\in\cS\times\cS\times\cX$, it was shown in \cite{wang2015construction} that the recursive formulas are given by
        \begin{align}
            &\f{P_{Y_1^{2N},U_1^{2i-1},S_{2N}|S_0}}{y_1^{2N},u_1^{2i-1},s_{2N}|s_0} \nn\\ &=\sum_{s^\prime_N}\sum_{u_{2i}}\f{P_{Y_1^N,U_1^i,S_N|S_0}}{y_1^N,u_{1,e}^{2i}\oplus u_{1,o}^{2i},s^\prime_{N} \vert s_0} \cdot \f{P_{Y_1^N,U_1^i,S_N|S_0}}{y_{N+1}^{2N},u_{1,e}^{2i},s_{2N} \vert s^\prime_N }, \label{eqn:sct_check}\\
            &\f{P_{Y_1^{2N},U_1^{2i},S_{2N}|S_0}}{y_1^{2N},u_1^{2i},s_{2N}|s_0} \nn\\
            &=\sum_{s^\prime_N}\f{P_{Y_1^N,U_1^i,S_N|S_0}}{y_1^N,u_{1,e}^{2i}\oplus u_{1,o}^{2i},s^\prime_{N} \vert s_0} \cdot \f{P_{Y_1^N,U_1^i,S_N|S_0}}{y_{N+1}^{2N},u_{1,e}^{2i},s_{2N} \vert s^\prime_N }. \label{eqn:sct_bit} 
        \end{align}
    The soft decision is given by 
    \begin{align}
 \f{P_{U_i\vert U^{i-1},Y^N}}{1\vert u_{1}^{i-1},y^{N}} &= \f{\sigma}{\log\frac{\sum_{s_N,s_{0}} \f{P_{U_i,U^{i-1},Y^N,S_N|S_0}}{1,u_{1}^{i-1},y^{N},s_N|s_0}P_{S_0}(s_0)}{\sum_{s_N,s_0}\f{P_{U_i,U^{i-1},Y^N,S_N|S_0}}{0,u_{1}^{i-1},y^{N},s_N|s_0}P_{S_0}(s_0)}} \label{eqn:sct_soft},
    \end{align}
    where $\sigma$ denotes the logistic (sigmoid) function. Equations  \eqref{eqn:sct_check},\eqref{eqn:sct_bit},\eqref{eqn:sct_soft} can be rewritten in a more compact form in terms of $\f{W^{(i)}_N}{y_1^N,u_1^{i-1}}$, i.e. 
        \begin{align}
            \f{E^{(2i-1)}_{2N}}{y_1^{2N}, u_1^{2i-2}} &= \f{\widetilde{F}}{\f{E^{(i)}_{N}}{y_1^{N},u_{1,e}^{2i-2}\oplus u_{1,o}^{2i-2}}, \f{E^{(i)}_{N}}{y_{N+1}^{2N},u_{1,e}^{2i-2}}}  \\
            \f{E^{(2i)}_{2N}}{y_1^{2N}, u_1^{2i-1}} &= \f{\widetilde{G}}{\f{E^{(i)}_{N}}{y_1^{N},u_{1,e}^{2i-2}\oplus u_{1,o}^{2i-2}}, \f{E^{(i)}_{N}}{y_{N+1}^{2N},u_{1,e}^{2i-2}}, u_{2i-1}}  \\
            \f{P_{U_i|U^{i-1},Y^N}}{1|u_1^{i-1},y_1^{N}} &= \f{\widetilde{H}}{\f{E^{(i)}_{N}}{y_1^{N},u_{1}^{i-1}}},
        \end{align}
    where $\widetilde{F}$ denotes the check-node, $\widetilde{G}$ denotes the bit-node and $\widetilde{H}$ denotes the soft decision. Since $W^{(i)}_{N}\in \bR^{|\cS|^2}$, we can find a bijection between $\bR^{|\cS|^2}$ and $\bR^{d}$ for any $d\in\bN$ (since $|\bR^d|=|\bR|$). We denote this mapping by $K:\bR^{|\cS|^2}\to\bR^{d}$, and its inverse by $K^{-1}$. Using $K$, we may rewrite the equations in the following way
    \begin{align}
        \f{E^{(1)}_1}{y_i} &= \f{K}{\f{W^{(1)}_1}{y_i}} \\
        \f{E^{(2i-1)}_{2N}}{y_1^{2N}, u_1^{2i-2}} &= \f{K}{\f{\widetilde{F}}{\f{K^{-1}}{\f{E^{(i)}_{N}}{y_1^{N},u_{1,e}^{2i-2}\oplus u_{1,o}^{2i-2}}}, \f{K^{-1}}{\f{E^{(i)}_{N}}{y_{N+1}^{2N},u_{1,e}^{2i-2}}}}}  \label{eqn:nsc_check}\\
        \f{E^{(2i)}_{2N}}{y_1^{2N}, u_1^{2i-1}} &= \f{K}{\f{\widetilde{G}}{\f{K^{-1}}{\f{E^{(i)}_{N}}{y_1^{N},u_{1,e}^{2i-2}\oplus u_{1,o}^{2i-2}}}, \f{K^{-1}}{\f{E^{(i)}_{N}}{y_{N+1}^{2N},u_{1,e}^{2i-2}}}, u_{2i-1}}}  \label{eqn:nsc_bit}\\
        \f{P_{U_i|U^{i-1},Y^N}}{1|u_1^{i-1},y_1^{N}} &= \f{\widetilde{H}}{\f{K^{-1}}{\f{E^{(i)}_{N}}{y_1^{N},u_{1}^{i-1}}}}. \label{eqn:nsc_soft}
    \end{align}
    We denote by $E=K^{-1}\circ W^{(1)}_1:\cY\to\cE$ the channel embedding, and Equations \eqref{eqn:nsc_check}, \eqref{eqn:nsc_bit}, \eqref{eqn:nsc_soft}, define $F,G,H$, as requested.
    Thus, we showed that $\f{P_{U_i\vert U^{i-1},Y^N}}{1\vert u_{1}^{i-1},y^{N}}$ is computed by four distinct functions, which concludes the proof.

    \subsubsection{Proof of Lemma \ref{lemma:end2end_approx}}
    \par Let $L^\ast= -\log P_{U_i\vert U^{i-1},Y^N}$ and $L_\theta = -U_i\log\f{ P_\theta}{U^{i},Y^N}-(1-U_i)\log\left(1-\f{P_\theta}{U^{i},Y^N}\right)$. By definition, $L_\theta$ is bounded by a constant $R^\prime$.
    By construction $\E{L^\ast} = \f{\sH}{U_i\vert U^{i-1}, Y^N}$. Fix $\varepsilon>0$, and assume $L^\ast$ is bounded by a constant $R^{\prime\prime}>0$. Let $R=\max\{R^{\prime},R^{\prime\prime}\}$.
    By the universal approximation theorem \cite{schafer2006recurrent}, 
    there exists $\hat{\theta}\in\Theta^\prime$ with $L_{\hat{\theta}} \leq R$ such that 
    \begin{equation}\label{eqn:proof:approx:bounded}
        \left|L^\ast-L_{\hat{\theta}}\right| < \frac{\varepsilon}{2}.
    \end{equation}
    For the case where $L^\ast$ is unbounded, we first note that $L^\ast$ is integrable since
    \begin{align*}
        \E{\left|L^\ast\right|} = \E{-\log P_{U_i|U^{i-1},Y^N}} \leq H(U_i) \leq 1.
    \end{align*}
    Therefore, by the dominated convergence theorem, there exists $R>0$ such that 
    \begin{equation}\label{eqn:proof:approx:unbounded}
        \left|L^\ast \bI_{L^\ast > R}\right| < \frac{\varepsilon}{2}.
    \end{equation}
    Combining Equations \eqref{eqn:proof:approx:bounded}, \eqref{eqn:proof:approx:unbounded} and the triangle inequality yields 
    \begin{align*}
        \E{\left|L^\ast-L_{\hat{\theta}}\right|} \leq \E{\left|L^\ast-L_{\hat{\theta}}\right|\bI_{L^\ast \leq R}}+\E{\left|L_{\hat{\theta}}\right|\bI_{L^\ast > R}} +\E{\left|L^\ast\right|\bI_{L^\ast > R}} \leq \frac{\varepsilon}{2}+0+\frac{\varepsilon}{2}=\varepsilon.
    \end{align*}
    This implies that the difference between $\f{\sH_{\Theta^\prime}}{U_i\vert U^{i-1}, Y^N},\f{\sH^M_{\Theta^\prime}}{U_i\vert U^{i-1}, Y^N}$ is at most $\varepsilon$.

    \subsubsection{Proof of Lemma \ref{lemma:end2end_est}}
    According to the triangle inequality
\begin{align*}
     \Big\vert \f{\sH_{\Theta^\prime}}{U_i\vert U^{i-1}, Y^N} - \f{\sH^M_{\Theta^\prime}}{U_i\vert U^{i-1}, Y^N} \Big\vert \leq \sup_{\theta\in\Theta^\prime} \Big\vert\E{L_\theta} - \frac{1}{M}\sum_{j=1}^M L_\theta \Big\vert.
\end{align*}
    Since $\Theta^\prime$ is compact and $L_\theta$ is continuous, the family of functions $L_\theta$ satisfy the uniform law of large numbers for stationary and ergodic processes \cite[Theorem 3.1]{pevskir1996uniform}. Therefore given $\varepsilon >0$, there exists $m\in\bN$ such that for all $M>m$ $\bP-a.s.$
    \begin{align*}
\Big\vert \f{\sH_{\Theta^\prime}}{U_i\vert U^{i-1}, Y^N} - \f{\sH^M_{\Theta^\prime}}{U_i\vert U^{i-1}, Y^N} \Big\vert < \frac{\varepsilon}{2}.
    \end{align*}

\subsection{Proof of Theorem \ref{thm:channel_sufficient}}
The first Markov relation is straightforward as $T^N$ is a function of $Y^N$ and $U^N=X^N G_N$. The second Markov relation is derived by showing that $T^N$ is a sufficient statistic of $Y^N$ for the estimation of $X^N$, or equivalently, $\f{I}{X^N;Y^N} = \f{I}{X^N;T^N}$. For $x^N,y^N\in \cX^N\times\cY^N$, consider the following chain of equalities:
\begin{align*}
    \f{P_{X^N, Y^N}}{x^n,y^n} 
    &= \prod_{i=1}^N \f{P_{X_i,Y_i\vert X^{i-1}, Y^{i-1}}}{x_i,y_i\vert x^{i-1},y^{i-1}} \\
    &\hspace{0cm}\stackrel{(a)}{=}\prod_{i=1}^N \f{P_{X_i\vert X^{i-1}}}{x_i\vert x^{i-1}}\f{P_Z}{y_i}
                     \frac{\f{P_{Y_i\vert X^{i}, Y^{i-1}}}{y_i\vert x^{i},y^{i-1}}}{\f{P_Z}{y_i}}\\
    &\hspace{0cm}\stackrel{(b)}{=}\f{\exp}{\log\f{P_Z^{\otimes N}}{y^N}}\f{\exp}{\sum_{i=1}^N \log\f{P_{X_i\vert X^{i-1}}}{x_i\vert x^{i-1}}} \\
    &\hspace{0.5cm}\times\f{\exp}{\sum_{i=1}^N\log\frac{\f{P_{Y_i\vert X^{i}, Y^{i-1}}}{y_i\vert x^{i},y^{i-1}}}{\f{P_Z}{y_i}}},
\end{align*}
where (a) follows from the chain rule, the absence of outputs feedback, and $P_{Y_i\vert X^{i}, Y^{i-1}}\ll P_Z$; and (b) is a result of rearranging the terms into exponents. Next, we identify that 
$\f{P_{X^N, Y^N}}{x^n,y^n} = \f{h}{y^N}\f{g}{\f{t^N}{y^N}, x^N}$, where
\begin{align*}
    \f{h}{y^N} &\triangleq \f{\exp}{\log\f{P_Z^{\otimes N}}{y^N}},\\
    \f{g}{t^N, x^N} &\triangleq \f{\exp}{\sum_{i=1}^N \log\f{P_{X_i\vert X^{i-1}}}{x_i\vert x^{i-1}}+ \f{t_{y^i}}{x^i}}.
\end{align*}
This is exactly the factorization in the well-known Fisher–Neyman factorization theorem \cite{fisher1935logic,neyman1935statistical}, and thus $T^N$ is a sufficient statistic of $Y^N$ for the estimation of $X^N$. Since $U^N=X^N G_N$ is bijective, we conclude the theorem.



\section{Conclusions and Future Work}\label{sec:conc}
\par This paper presents data-driven algorithms for the design of polar codes. It began with devising algorithms for symmetric input distributions and for memoryless channels. Next, it addressed the case of channels with memory. Here, we devised a \gls{nsc} decoder that approximated the \gls{sc} decoder core elements with \glspl{nn}. We showed that the \gls{nsc} decoder is consistent, in the sense for sufficiently many samples of the channel inputs and outputs, the symmetric capacity of the effective bit channels is estimated with an arbitrary precision. Next, we extended the methods for the case where the input distribution is not necessarily symmetric via the Honda-Yamamoto scheme.

\par We also showed the role of neural estimation methods in a data-driven design of polar codes. Specifically, for the case of memoryless channels the \gls{mine} algorithm provides a valid construction of the channel embedding. For channels with memory, the \gls{dine} algorithm may be used to construct the channel embedding. The importance of neural estimation methods comes from the fact that the channel embedding may be computed independently from the \gls{nsc} decoder. It is visible in the case of memoryless channels; indeed, in this case the channel embedding is the only thing need to be estimated and the \gls{sc} decoder for memoryless channels may be applied ``as is". This set one of the future research goals of this work -- to learn a common \gls{nsc} for all channels, and devise an algorithm to estimate a channel embedding that is valid for the common \gls{nsc}.

\par We demonstrated our approach on both memoryless channels and channels with memory. We showed that the proposed algorithms have similar decoding errors as their analytic counterpart, whenever they exists. We also demonstrated our algorithm on channels for which there is no practical \gls{sc} decoder in the literature. Another experiment is conducted on the \gls{isi} channel that emphasized its main advantage over the \gls{sct} decoder; its computational complexity does not grow with the state alphabet of the channel. Our next steps would be to examine our methodology with existing algorithms in the fields, such as list decoding \cite{tal2015list}. We are also interested in using polar coding framework for the problem of capacity estimation. This way, we will be able to estimate both the capacity and a capacity achieving code with the same algorithm.

\printbibliography

@inproceedings{aharoni2023data,
  title={Data-Driven Polar Codes for Unknown Channels With and Without Memory},
  author={Aharoni, Ziv and Huleihel, Bashar and Pfister, Henry D and Permuter, Haim H},
  booktitle={2023 IEEE International Symposium on Information Theory (ISIT)},
  pages={1890--1895},
  year={2023},
  organization={IEEE}
}

@article{pevskir1996uniform,
  title={The uniform ergodic theorem for dynamical systems},
  author={Pe{\v{s}}kir, Goran and Weber, Michel},
  journal={Convergence in Ergodic Theory and Probability},
  volume={5},
  pages={305},
  year={1996},
  publisher={Walter de Gruyter}
}

@inproceedings{honda2012polar,
  title={Polar coding without alphabet extension for asymmetric channels},
  author={Honda, Junya and Yamamoto, Hirosuke},
  booktitle={2012 IEEE International Symposium on Information Theory Proceedings},
  pages={2147--2151},
  year={2012},
  organization={IEEE}
}

@article{hornik1989multilayer,
  title={Multilayer feedforward networks are universal approximators},
  author={Hornik, K. and Stinchcombe, M. and White, H.},
  journal={Neural networks},
  volume={2},
  number={5},
  pages={359--366},
  year={1989},
  publisher={Elsevier}
}

@inproceedings{makkuva2021ko,
  title={Ko codes: inventing nonlinear encoding and decoding for reliable wireless communication via deep-learning},
  author={Makkuva, Ashok V and Liu, Xiyang and Jamali, Mohammad Vahid and Mahdavifar, Hessam and Oh, Sewoong and Viswanath, Pramod},
  booktitle={International Conference on Machine Learning},
  pages={7368--7378},
  year={2021},
  organization={PMLR}
}

@inproceedings{arikan2009rate,
  title={On the rate of channel polarization},
  author={Arikan, Erdal and Telatar, Emre},
  booktitle={2009 IEEE International Symposium on Information Theory},
  pages={1493--1495},
  year={2009},
  organization={IEEE}
}

@inproceedings{aharoni2022density,
  title = {Density Estimation of Processes with Memory via Donsker Vardhan},
  author = {Z. Aharoni and D. Tsur and H. H. Permuter},
  booktitle = {2022 IEEE Int. Symp. Inf. Theory (ISIT)},
  pages = {330--335},
  year = {2022},
  organization = {IEEE},}

@inproceedings{tsur2022optimizing,
  title = {Optimizing Estimated Directed Information over Discrete Alphabets},
  author = {{D. Tsur and Z. Aharoni and Z. Goldfeld and H. H. Permuter}},
  booktitle = {2022 IEEE Int. Symp. Inf. Theory (ISIT)},
  pages = {2898--2903},
  year = {2022},
  organization = {IEEE},}

@article{tsur2022neural,
  author = {D. Tsur and Z. Aharoni and Z. Goldfeld and H. H. Permuter},
  journal = {submitted to IEEE Trans. Inf. Theory},
  publisher = {IEEE},
  title = {Neural Estimation and Optimization of Directed Information over Continuous Spaces},}

@article{tal2021polar,
  title = {Polar Codes for the Deletion Channel: Weak and Strong Polarization},
  author = {I. Tal and H. D. Pfister and A. Fazeli and A. Vardy},
  journal = {IEEE Trans. Inf. Theory},
  volume = {68},
  number = {4},
  pages = {2239--2265},
  year = {2021},
  publisher = {IEEE},}

@article{xu2020deep,
  title = {Deep Learning-aided Belief Propagation Decoder for Polar Codes},
  author = {W. Xu and X. Tan and Y. Be’ery and Y. Ueng and Y. Huang and X. You and C. Zhang},
  journal = {IEEE Journal on Emerging and Selected Topics in Circuits and Systems},
  volume = {10},
  number = {2},
  pages = {189--203},
  year = {2020},
  publisher = {IEEE},}

@inproceedings{ebada2019deep,
  title = {Deep Learning-based Polar Code Design},
  author = {M. Ebada and S. Cammerer and A. Elkelesh and S. ten Brink},
  booktitle = {2019 57th Annual Allerton Conference on Communication, Control, and Computing (Allerton)},
  pages = {177--183},
  year = {2019},
  organization = {IEEE},}

@article{csacsouglu2019polar,
  title = {Polar Coding for Processes with Memory},
  author = {E. {\c{S}}a{\c{s}}o{\u{g}}lu and I. Tal },
  journal = {IEEE Trans. Inf. Theory},
  volume = {65},
  number = {4},
  pages = {1994--2003},
  year = {2019},
  publisher = {IEEE},}

@inproceedings{doan2018neural,
  title = {Neural Successive Cancellation Decoding of Polar Codes},
  author = {N. Doan and S. A. Hashemi and W. J. Gross},
  booktitle = {2018 IEEE 19th international workshop on signal processing advances in wireless communications (SPAWC)},
  pages = {1--5},
  year = {2018},
  organization = {IEEE},}

@article{belghazi2018mine,
  author = {M. I. Belghazi and A. Baratin and S. Rajeswar and S. Ozair and Y. Bengio and A. Courville and R. D. Hjelm},
  date-added = {2022-01-26 10:39:57 +0200},
  date-modified = {2022-01-26 10:39:57 +0200},
  journal = {arXiv preprint arXiv:1801.04062},
  title = {{MINE}: Mutual Information Neural Estimation},
  year = {2018},}

@inproceedings{xu2017improved,
  title = {Improved Polar Decoder Based on Deep Learning},
  author = {W. Xu and Z. Wu and Y. Ueng and X. You and C. Zhang},
  booktitle = {2017 IEEE International workshop on signal processing systems (SiPS)},
  pages = {1--6},
  year = {2017},
  organization = {IEEE},}

@article{tal2015list,
  title = {List Decoding of Polar Codes},
  author = {I. Tal  and A. Vardy},
  journal = {IEEE Trans. Inf. Theory},
  volume = {61},
  number = {5},
  pages = {2213--2226},
  year = {2015},
  publisher = {IEEE},}

@inproceedings{wang2015construction,
  title = {Construction of Polar Codes for Channels with Memory},
  author = {R. Wang and J. Honda and H. Yamamoto and R. Liu and Y. Hou},
  booktitle = {2015 IEEE Information Theory Workshop-Fall (ITW)},
  pages = {187--191},
  year = {2015},
  organization = {IEEE},}

@article{wang2014joint,
  title = {Joint Successive Cancellation Decoding of Polar Codes over Intersymbol Interference Channels},
  author = {R. Wang and R. Liu and Y. Hou},
  journal = {arXiv preprint arXiv:1404.3001},
  year = {2014},}

@article{honda2013polar,
  title = {Polar Coding Without Alphabet Extension for Asymmetric Models},
  author = {J. Honda and H. Yamamoto},
  journal = {IEEE Trans. Inf. Theory},
  volume = {59},
  number = {12},
  pages = {7829--7838},
  year = {2013},
  publisher = {IEEE},}

@article{arikan2009channel,
  title = {Channel Polarization: A Method for Constructing Capacity-achieving Codes for Symmetric Binary-input Memoryless Channels},
  author = {E. Arikan},
  journal = {IEEE Trans. Inf. Theory},
  volume = {55},
  number = {7},
  pages = {3051--3073},
  year = {2009},
  publisher = {IEEE},}

@book{cover2006elements,
  address = {New-York},
  author = {T. M. Cover and J. A. Thomas},
  date-added = {2022-01-26 10:39:57 +0200},
  date-modified = {2022-01-26 10:39:57 +0200},
  edition = {2nd},
  publisher = {Wiley},
  title = {Elements of Information Theory},
  year = {2006},}

@inproceedings{schafer2006recurrent,
  author = {A. M. Fer and H. G. Zimmermann},
  booktitle = {Proceedings of International Conference on Artificial Neural Networks},
  date-added = {2022-01-26 10:39:57 +0200},
  date-modified = {2022-01-26 10:39:57 +0200},
  organization = {Springer},
  pages = {632--640},
  title = {Recurrent Neural Networks Are Universal Approximators},
  year = {2006},}

@book{kramer1998directed,
  author = {G. Kramer},
  booktitle = {ETH Series in Information Processing},
  date-added = {2022-01-26 10:39:57 +0200},
  date-modified = {2022-01-26 10:39:57 +0200},
  title = {Directed Information for Channels with Feedback},
  volume = {11},
  year = {1998},}

@article{massey1990causality,
  address = {Waikiki, Hawaii, USA},
  author = {J. Massey},
  date-added = {2022-01-26 10:39:57 +0200},
  date-modified = {2022-01-26 10:39:57 +0200},
  journal = {Proc. Int. Symp. Inf. Theory Applic. (ISITA-90)},
  pages = {303-305},
  title = {Causality, Feedback and Directed Information},
  year = {1990},}

@article{donsker1983asymptotic,
  author = {M. D. Donsker and S. S. Varadhan},
  date-added = {2022-01-26 10:39:57 +0200},
  date-modified = {2022-01-26 10:39:57 +0200},
  journal = {Communications on Pure and Applied Mathematics},
  number = {2},
  pages = {183--212},
  title = {Asymptotic Evaluation of Certain {M}arkov Process Expectations for Large Time. IV},
  volume = {36},
  year = {1983},}

@article{fisher1935logic,
  title = {The Logic of Inductive Inference},
  author = {R. A. Fisher},
  journal = {Journal of the royal statistical society},
  volume = {98},
  number = {1},
  pages = {39--82},
  year = {1935},
  publisher = {JSTOR},}

@article{neyman1935statistical,
  title = {Statistical Problems in Agricultural Experimentation},
  author = {J. Neyman and K. Iwaszkiewicz},
  journal = {Supplement to the Journal of the Royal Statistical Society},
  volume = {2},
  number = {2},
  pages = {107--180},
  year = {1935},
  publisher = {JSTOR},}

@article{ising1925beitrag,
  author = {E. Ising},
  date-added = {2022-01-26 10:39:57 +0200},
  date-modified = {2022-01-26 10:39:57 +0200},
  journal = {Zeitschrift f{\"u}r Physik},
  number = {1},
  pages = {253--258},
  publisher = {Springer},
  title = {Beitrag Zur Theorie Des Ferromagnetismus},
  volume = {31},
  year = {1925},}

{
}
\end{document}